\documentclass[10pt]{article}

\usepackage{amsmath,amssymb}
\usepackage{booktabs}
\usepackage{geometry}
\usepackage{float}
\usepackage{authblk}
\usepackage{amsthm}
\usepackage{xcolor}
\usepackage{amscd}
\usepackage{url}
\usepackage{cite}
\usepackage[mathlines]{lineno}
\providecommand{\doi}[1]{\url{https://doi.org/#1}}
\newtheorem{theorem}{Theorem}

\newtheorem{rem}{Remark}
\newtheorem{corollary}{Corollary}
\newtheorem{proposition}{Proposition}

\title{Temporal invariance of evolutionary entropy 
	\\ and comparative demographic analysis}

\author[1]{Jorge Buescu}
\author[2]{Henrique M. Oliveira\thanks{Corresponding author: \texttt{henrique.m.oliveira@tecnico.ulisboa.pt}.}}
\affil[1]{\small Departamento de Matem\'atica, Faculdade de Ci\^encias, Universidade de Lisboa,  Portugal; Centro de Estudos Matem\'aticos, Universidade de Lisboa, Portugal}
\affil[2]{\small Departamento de Matem\'atica, Instituto Superior T\'ecnico, Universidade de Lisboa, Portugal; Centro de An\'alise Matem\'atica, Geometria e Sistemas Din\^amicos, Universidade de Lisboa, Portugal}

\date{\today}

\begin{document}

\maketitle
\begin{abstract}
Does evolutionary entropy depend on the width of the age classes used to describe a population? We show that the apparent dependence on temporal resolution is a consequence of the loss of information about reproductive timing within discrete classes.
When this information is retained,
corresponding continuous and discrete representations yield the same
cycle entropy, generation time, evolutionary entropy, and reproductive potential in a
common unit of time.
 When within-class timing is unknown, a uniform local
 reconstruction yields a
logarithmic correction.
  For constant post-maturity hazard and fertility, continuous exponential and discrete geometric descriptions correspond exactly at every time scale. Thus, for a fixed \textcolor{black}{reference duration}, evolutionary entropy is a
property of the reproductive process, not of its temporal partition.
These results provide a mathematical basis for comparing demographic models across different temporal scales by identifying the information required to separate differences in reproductive organisation from effects of the chosen age-class representation.
\end{abstract}

\noindent\textbf{Keywords:} evolutionary entropy; reproductive potential; age-structured populations; temporal discretisation; generation time; Euler--Lotka equation.

\section{Introduction}
\label{sec:introduction}
Age is a continuous biological variable, yet demographic data and population models are often represented in discrete age classes. The age-structured equations of McKendrick \cite{McKendrick1926} describe population dynamics in continuous age, whereas Leslie's matrix formulation \cite{Leslie1945} provides the canonical discrete representation. Although the two formulations are mathematically closely related, they raise a fundamental question: which properties are intrinsic to the demographic process, and which are artefacts of the temporal representation?
Characteristic-based and finite-difference approximations of continuous-age
models have been studied with regard to stability, convergence, and numerical
accuracy \cite{AbiaAnguloLopezMarcos2005,PelovskaIannelli2006}. Here the question is instead which
demographic functionals can be recovered exactly from a temporal partition. The relation between McKendrick and Leslie descriptions is also studied in \cite{DilaoLakmeche2006}.

Throughout this article, survivorship, fertility, and hazard are defined
on an underlying continuous-age demographic process; changes of temporal
scale and age-class discretisations are treated as alternative
representations of that same process.
This distinction is especially important for evolutionary entropy. Introduced by Demetrius \cite{Demetrius1974,Demetrius1975}, it characterises the temporal distribution of reproductive contributions in age-structured populations. Evolutionary entropy in discrete models admits an ergodic interpretation in terms of the proliferation of typical genealogies, from which the concepts of generation time and reproductive potential arise naturally. 
These concepts were subsequently developed into a broader theory of structured population dynamics and natural selection; see, in particular, Demetrius and Legendre \cite{DemetriusLegendre2013}.
Related sensitivities of growth and evolutionary entropy are studied in \cite{BuescuOliveiraSousa2023}.

In discrete age-structured models, normalised reproductive
contributions form a probability distribution. Formally, the cycle entropy is
a Shannon entropy \cite{Shannon1948}, but its interpretation here is demographic:
the probabilities describe reproductive contributions at different ages, and the
entropy characterises their distribution across the life cycle
\cite{Demetrius1974,Demetrius1975}.
Shannon entropy depends on the partition of the underlying probability distribution: using finer age classes generally changes both the class probabilities and the resulting entropy.  This dependence on resolution is classical in information theory and dates back at least to
 R\'enyi
\cite{Renyi1959}.
The corresponding relation between quantised and differential entropy is
reviewed by Gray and Neuhoff \cite{GrayNeuhoff1998}. In demographic applications,
Demetrius, Legendre, and \textcolor{black}{Harremo{\"e}s} \cite{DemetriusLegendreHarremoes2009} 
explicitly distinguish continuous entropy calculations from their discrete
approximations.

This directly impacts the  demographic problem. 
Age classes are modelling constructs, not biological entities.
The same population may be represented by five-year, annual,  monthly, or even finer age classes. If these descriptions represent the same underlying survival and reproductive process, evolutionary entropy should define the same demographic quantity when expressed in a common physical unit of time.

Evolutionary entropy  $H$ is defined as
\begin{equation*}
	 H = \frac{S}{T},
	 \end{equation*}
where $S$ is the cycle entropy and $T$ is the generation time. Since $S$ is dimensionless,  $H$ has units of inverse time $[T]^{-1}$ and is therefore a rate. We refer to it simply as evolutionary entropy throughout, keeping this dimensional character in mind.

The decomposition of population growth into evolutionary entropy and
reproductive potential is due to Demetrius \cite{Demetrius1974,Demetrius1975}.
With the sign convention used here, it reads \(r=H+\Phi\)
\cite{DemetriusLegendre2013}, where \(r\) is the intrinsic growth rate and
\(\Phi\) is reproductive potential.
Section~\ref{sec:continuous_demography} defines the two terms using the same
temporal reference. The results for \(\Phi\) are stated alongside those for
entropy; their proofs are collected in Appendix~\ref{app:potential_proofs}.
 
 A simple example illustrates the issue. Consider a human population represented in hypothetical five-year age classes. A generation time $T_5=4$ classes corresponds to a physical generation time of 20 years. If the same population is represented in one-year classes, then $T_1=20$ classes. The numerical values differ because the unit has changed, but the physical generation time is the same: $5T_5=T_1=20$ years. The same principle applies to evolutionary entropy. With five-year classes, $H=S/4$ per five-year unit; with one-year classes, $H=S/20$ per year. Since $\frac{1}{5} \, \frac{S}{4} =\frac{S}{20}$, both representations yield the same evolutionary entropy in a common physical unit.

The difficulty is that, unlike generation time, cycle entropy does not automatically transform with the choice of age classes. Representing the same population with five-year or annual classes generally produces different Shannon entropies because the underlying probability distributions are partitioned differently. Denoting by 
$S_5$ and $S_1$  
the Shannon entropies computed from the corresponding class probabilities,  in general $S_5 \neq S_1$. 
Consequently, the evolutionary entropy $H$ 
 appears to depend on the age-class width, even when generation time has been expressed in a common physical unit. The underlying demographic process has not changed; only its temporal representation has.

A related issue arises for survivorship-based entropy: a discretisation
that preserves its interpretation as an elasticity need not preserve its use
as a measure of mortality shape \cite{deVries2023,Giaimo2024,Bernard2024}. This concerns a
different functional from \(H=S/T\), but illustrates why the property to be
preserved must be specified.

The problem addressed in this paper may therefore be stated as follows:
How must evolutionary entropy be computed so that equivalent temporal descriptions of the same demographic process yield identical values when expressed in the same time unit?

The aim is to provide biologists and mathematical biologists with a rigorous
basis for comparing demographic models expressed at different temporal scales.
The results specify which quantities can be compared directly, which require
within-class information, and when an approximation is needed. This helps
distinguish genuine demographic differences from effects of temporal
representation.

To answer this question, it is useful to distinguish four different but related problems. They correspond to the four possible transitions between continuous and discrete temporal descriptions and are summarised in Table~\ref{tab:intro_questions}.

\begin{table}[ht]
\centering
\small
\caption{The four temporal representation problems for evolutionary
entropy.}
\label{tab:intro_questions}
\vspace{2mm}
\renewcommand{\arraystretch}{1.30}
\begin{tabular}{
p{0.12\textwidth}
p{0.20\textwidth}
p{0.25\textwidth}
p{0.32\textwidth}}
\toprule
\textbf{Type}
&
\textbf{Problem}
&
\textbf{Description}
&
\textbf{Question}
\\
\midrule

Continuous
\(\to\)
continuous
&
Change of temporal unit
&
The same continuous process is expressed in a different unit of time
&
How do continuous demographic quantities transform when years are
replaced by months, or more generally when the temporal unit is
changed?
\\

\addlinespace

Continuous
\(\to\)
discrete
&
Temporal aggregation
&
A continuous reproductive-age distribution is replaced by age-class probabilities
&
What information is lost when continuous reproductive timing is
represented only through age-class probabilities?
\\

\addlinespace

Discrete
\(\to\)
continuous
&
Continuous reconstruction
&
Class probabilities, together with within-class timing information, are used to reconstruct a continuous description
&
Can continuous cycle entropy and generation time be recovered from an
age-class representation, asymptotically or exactly?
\\

\addlinespace

Discrete
\(\to\)
discrete
&
Change of temporal mesh
&
The same demographic process is represented with different class widths.
&
How can equivalent discrete representations with different class widths yield the same evolutionary entropy when expressed in a common unit of time?
\\

\bottomrule
\end{tabular}
\end{table}

The continuous-to-continuous problem is a change of time unit. Expressing age in months rather than years does not alter the underlying population;  ages and demographic rates acquire different numerical values because their units change. We show that generation time and evolutionary entropy  follow the corresponding transformation laws.

The continuous-to-discrete problem is temporal aggregation. Reproductive contributions distributed over a continuous range of ages are aggregated into age classes. The total contribution assigned to each class is preserved, but its temporal distribution within the class is generally lost.

The reverse discrete-to-continuous problem is not straightforward. Class probabilities alone do not determine a unique continuous reproductive-age distribution; additional information about within-class reproductive timing is required. The reconstruction problem is to identify what information is missing and how much is needed to recover cycle entropy, generation time, and evolutionary entropy.

The discrete-to-discrete problem is subtler. A coarse model does not, in general, determine its own refinement, since knowing the reproductive contribution assigned to a five-year age class does not specify how it is distributed among the corresponding annual classes. Thus, passing from a five-year to an annual model is not simply an algebraic operation on the coarse class probabilities: additional temporal information is required if the two models are to represent the exact same demographic process.

This requirement also separates temporal reconstruction from changes in
model construction. Converting life tables into stage matrices can affect
estimated growth and generation time \cite{FujiwaraDiazLopez2017}; in stage
models, census conventions can change the assumed stage-duration distributions
and hence eigenvalue analyses \cite{Okuyama2019}. Such changes need not describe
the same underlying process.

Within the Leslie family, Hinrichsen \cite{Hinrichsen2023} gives an
aggregation that preserves stable growth under suitable weighting. The
preprint of Hinrichsen, Yokomizo, and Salguero-G{\'o}mez
\cite{HinrichsenEtAl2026} also evaluates aggregation errors in generation time
and reproductive-distribution entropy. These reduced-matrix comparisons
complement the exact reconstruction of a fixed reproductive-age density
considered here. Stage aggregation through genealogical decompositions is studied in \cite{BienvenuEtAl2017}; entropy-rate preservation under aggregation of Markov states is treated in \cite{GeigerTemmel2014}.

A related but distinct question concerns reproductive amplitude. Should evolutionary entropy depend on a common multiplicative factor applied to all reproductive contributions? Since entropy is computed from normalised reproductive contributions, at fixed \(r\) it should depend only on their relative  distribution. We examine this homogeneity property after defining the continuous reproductive-age distribution.

The temporal questions in Table~\ref{tab:intro_questions} arose naturally in recent comparative work on Leslie-type demographic models \cite{BuescuElaydiOliveira2026}. That study analysed entropy-maximising reproductive windows and a critical threshold for geometric reproductive tails across 130 animal species. Since all projection matrices used annual age classes, evolutionary entropy and the associated critical equation were computed in an annual discrete representation.

This raises a question of temporal consistency. What would change if the same populations were instead represented by monthly age classes? The underlying biology would be unchanged, but the reproductive probabilities, the geometric parameter describing a reproductive tail, and the ordinary Shannon entropy of the class probabilities would all differ. Could the estimated evolutionary entropy, or even the location of its maximum, change solely because the temporal mesh had been refined?

The core of the paper addresses this issue. Temporal aggregation preserves the total reproductive contribution assigned to each age class but generally discards its within-class temporal distribution. Recovering this information is the key to identifying equivalent representations at different resolutions.

For general reproductive-age distributions, decomposing reproductive contributions into between-class and within-class components leads to an exact reconstruction principle. When the within-class profile is unknown, a natural local approximation yields a logarithmic correction associated with the class width and recovers the continuous entropy in the limit
of vanishing class width.

A stronger result holds for populations with constant post-maturity hazard and fertility. In this case, the normalised reproductive-age distribution is exponential after the age at first reproduction, and its aggregation into equal age classes is geometric. The within-class temporal structure is identical throughout the reproductive period and can therefore be recovered exactly.
For fixed mesh $h>0$, the geometric parameter  \(q_h\) 
plays the role of a finite-step counterpart to the continuous decay rate, and a mesh-invariant parameter $\beta$ relates different resolutions.

The invariance principle can now be stated rigorously: equivalent temporal representations have the same reconstructed cycle entropy and physical generation time. Consequently, they yield the same evolutionary entropy when expressed in a common unit of time.

This distinction between mesh-dependent coordinates and mesh-independent demographic quantities is also important for the entropy-maximising critical point. It explains the difference between the critical thresholds of the continuous and conventional discrete models and why, despite not being mathematically identical, they are nevertheless numerically very close for long-lived populations.

The contribution of this paper is the joint reconstruction of \(S,T,H\) and,
when the same Euler--Lotka rate \(r\) is retained, \(\Phi\), across temporal
resolutions. The classical chain rule and quantisation identities
\cite{Shannon1948,Renyi1959,GrayNeuhoff1998} are used to identify sufficient
within-class summaries, without requiring reconstruction of the entire density.
The resulting comparison distinguishes exact mesh invariance, rate covariance
under a change of units, and compensating changes under a new reference duration.
For the exponential--geometric family, we transport the continuous entropy maximum
exactly and quantify the first-order shift of the corrected discrete optimum in
Theorem~\ref{thm:discrete_critical_threshold_limit}. These questions are distinct
from matrix-parameter sensitivity \cite{Oliveira2026} and the longevity-calibrated
application \cite{BuescuElaydiOliveira2026GreenlandShark}.

The paper is organised around the four questions in Table~\ref{tab:intro_questions}. We begin by introducing the continuous reproductive-age distribution together with the corresponding entropy and generation time, and examine their behaviour under changes of temporal unit. We then study temporal aggregation and identify the information lost in passing from continuous to discrete representations. Reversing the process leads to an exact partition decomposition.
When within-class timing is unavailable, a uniform local reconstruction recovers the continuous cycle entropy in the fine-mesh limit. We then compare equivalent discrete representations at different temporal resolutions. 
 Finally, we show that, for the constant-hazard model, continuous and discrete temporal meshes are related by an exact transformation, allowing a comparison of the corresponding critical entropy thresholds.

\section{Continuous reproductive-age distribution}
\label{sec:continuous_demography}

Let \(a\geq0\) denote age. If $l(a)$ is the probability of survival from birth to age $a$ and $m(a)$ is the age-specific fertility rate, then the net reproductive function is
\begin{equation}
V(a)=l(a)m(a).
\label{eq:net_reproductive_function}
\end{equation}
The intrinsic growth rate $r$ is defined by the continuous Euler--Lotka equation
\begin{equation}
\int_0^\infty e^{-ra}V(a)\,da=1.
\label{eq:euler_lotka_continuous}
\end{equation}

The left-hand side is the Laplace transform of \(V\). Under the 
assumptions that \(V\geq0\) is nonzero and integrable, and that reproductive lifespan is finite, it defines an entire function of exponential
type. Restricted to the real axis, it is completely monotone in the
sense of Bernstein and tends to zero as \(r\to+\infty\). It is strictly decreasing and tends to \(+\infty\) as \(r\to-\infty\). It follows
that \eqref{eq:euler_lotka_continuous} has a unique real solution. For unbounded reproductive ages, existence of a real Euler--Lotka solution is assumed.

For a fixed value of \(r\) with \(0<\int_0^\infty e^{-ra}V(a)\,da<\infty\), the normalised reproductive-age
distribution associated with \(V\) is
\begin{equation}
p_r(a)
=
\frac{
e^{-ra}V(a)
}{
\displaystyle
\int_0^\infty e^{-rs}V(s)\,ds
},
\qquad a\geq0.
\label{eq:continuous_reproductive_density}
\end{equation}
At the Euler--Lotka value of \(r\),
\eqref{eq:continuous_reproductive_density} reduces to
\begin{equation*}
p(a)=e^{-ra}V(a),
\end{equation*}
with
$
\int_0^\infty p(a)\,da=1.
$
The probability density \(p\) describes the temporal distribution of normalised
reproductive contribution. It is the continuous counterpart of the
reproductive probability distribution used in discrete age-structured
models.

The normalisation in
\eqref{eq:continuous_reproductive_density} implies $p_r$ is homogeneous of degree zero in $V$; that is, the  map
\[
V(a)\longmapsto cV(a),
\qquad c>0,
\]
leaves \(p_r\) unchanged. Thus \(p_r\) depends only on the relative
temporal distribution of reproductive contributions.
In particular, this invariance holds under a common multiplicative rescaling of all fertility rates, or more generally of the net reproductive function \(V\).
It applies to the normalised reproductive distribution for fixed
\(r\), and should not be confused with rescaling the demographic
window and solving the Euler--Lotka equation again, which in general
changes the intrinsic growth rate $r$.

The generation time, or mean reproductive age, is the first temporal moment of $p$:
\begin{equation*}
T
=
\int_0^\infty a\,p(a)\,da.
\end{equation*}
A genealogical interpretation of generation time in matrix models is given in \cite{BienvenuLegendre2015}.

The probability density $p$ has dimension of inverse time $[T]^{-1}$, so the argument of the logarithm in the entropy integral requires a choice of time scale.
Fix a positive reference duration and let \(\tau>0\) be its numerical
value in the same time unit as age \(a\). A fixed one-year reference has
\(\tau=1\) when ages are measured in years and \(\tau=12\) when they are
measured in months. The product \(\tau p(a)\) is therefore dimensionless.
Define the continuous cycle entropy and the corresponding rate by
\begingroup
\begin{equation}
S_\tau=-\int_0^\infty p(a)\log\!\bigl(\tau p(a)\bigr)\,da,
\qquad H_\tau=\frac{S_\tau}{T}.
\label{eq:continuous_cycle_entropy}
\end{equation}
\par\endgroup

Unless stated otherwise, the reference duration is one year. We write
\(S=S_\tau\) and \(H=H_\tau\) when this reference is fixed; in expressions
with ages measured in years, \(\tau=1\). Thus \(H\) is expressed in
\(\mathrm{yr}^{-1}\). Throughout, \(0<T<\infty\),
\(\int_0^\infty p(a)|\log(\tau p(a))|\,da<\infty\), and the class entropies
under consideration are finite.

\begingroup
\begin{rem}[Differential and discrete entropy]
\label{rem:differential_entropy_sign}
The integral in \eqref{eq:continuous_cycle_entropy} is a differential
entropy \cite{Shannon1948,DemetriusLegendreHarremoes2009}.
Unlike probabilities, density values need not lie in \([0,1]\), so this
integral need not be non-negative. For example, fix \(a_0\geq0\) and a
window length \(\varepsilon>0\). The uniform density on that window satisfies
\begin{equation*}
p_\varepsilon(a)=
\begin{cases}
1/\varepsilon,&a_0\leq a\leq a_0+\varepsilon,\\
0,&\text{otherwise},
\end{cases}
\qquad
S_\tau[p_\varepsilon]=\log(\varepsilon/\tau).
\end{equation*}
Here \(S_\tau[p_\varepsilon]\) means \eqref{eq:continuous_cycle_entropy}
evaluated at \(p_\varepsilon\); it is negative when \(\varepsilon<\tau\).
By contrast, \(-\sum_j p_j\log p_j\geq0\) for every discrete probability
distribution \((p_j)_j\). The reconstruction results below concern the
reference-dependent differential entropy, not an identification of these
two functionals or a claim that the reconstructed entropy is non-negative.
\end{rem}
\par\endgroup

\begingroup
With the same reference duration, define the reproductive potential
\cite{DemetriusLegendre2013,DemetriusGundlach2014} by
\begin{equation*}
\Phi_\tau=\frac1T\int_0^\infty
p(a)\log\!\bigl(\tau V(a)\bigr)\,da.
\end{equation*}
The integrand is taken to be zero wherever \(p(a)=0\). The integral is finite
under the assumptions above, since \(\log(\tau V(a))=ra+\log(\tau p(a))\)
where \(p(a)>0\). We write \(\Phi=\Phi_\tau\) when the one-year reference
is understood.
\begin{proposition}[Growth decomposition]
\label{pro:potential_decomposition}
For the reproductive density \(p(a)=e^{-ra}V(a)\), where \(r\) satisfies
Euler--Lotka and the finiteness assumptions above hold,
\begin{equation}
r=H_\tau+\Phi_\tau.
\label{eq:entropy_potential_decomposition}
\end{equation}
\end{proposition}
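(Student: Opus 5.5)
The plan is to use the pointwise identity already noted after the definition of \(\Phi_\tau\). On the support of \(p\), where \(p(a)>0\), the Euler--Lotka normalisation \(p(a)=e^{-ra}V(a)\) gives
\begin{equation*}
\log\!\bigl(\tau V(a)\bigr)=ra+\log\!\bigl(\tau p(a)\bigr).
\end{equation*}
I will multiply this identity by \(p(a)\) and integrate over \([0,\infty)\). Off the support, both integrands are zero by convention, so nothing is lost there.

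The next step is to split the resulting integral into two pieces. This requires each piece to be absolutely integrable, and that is exactly what the standing assumptions provide. The term \(r\int_0^\infty a\,p(a)\,da=rT\) is finite because \(0<T<\infty\). The term \(\int_0^\infty p(a)\log(\tau p(a))\,da=-S_\tau\) is finite because \(\int p|\log(\tau p)|\,da<\infty\). Linearity of the integral then gives
\begin{equation*}
\int_0^\infty p(a)\log\!\bigl(\tau V(a)\bigr)\,da = rT - S_\tau ,
\end{equation*}
and this also shows that the left-hand side is finite.

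Finally, I divide by \(T>0\) and use the definitions \(\Phi_\tau=\frac1T\int p\log(\tau V)\,da\) and \(H_\tau=S_\tau/T\). This gives \(\Phi_\tau=r-H_\tau\), which is \eqref{eq:entropy_potential_decomposition}.

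The only point that needs care is justifying the split of the integral, that is, avoiding an \(\infty-\infty\) situation. The finiteness hypotheses settle this directly. I would also remark that \(\tau\) enters \(\Phi_\tau\) and \(H_\tau\) through the same additive term \(\log\tau\), with opposite signs. The decomposition therefore holds for every reference duration, provided the same \(\tau\) is used in both quantities.
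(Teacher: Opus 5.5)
Your proposal is correct and follows the paper's own proof in Appendix~\ref{app:potential_proofs}: the pointwise identity \(\log(\tau V(a))=ra+\log(\tau p(a))\) on \(\{p>0\}\), multiplied by \(p\), integrated to give \(T\Phi_\tau=rT-S_\tau\), then divided by \(T>0\). You also spell out the integrability needed to split the integral, which the paper leaves implicit in its standing finiteness assumptions.
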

\par\endgroup

This result is proved in Appendix~\ref{app:potential_proofs}, together with all subsequent results on the reproductive potential \(\Phi\), which follow from the fundamental decomposition \(r=H+\Phi\) and the corresponding results for \(H\).

The variables 
\(p\), \(T\), \(S\), \(H\), and \(\Phi\)
define the continuous
formulation adopted in this paper. 
 The next section addresses the
first question in Table~\ref{tab:intro_questions}: how these quantities
transform when the same continuous demographic process is expressed in
a different unit of time.

\section{Change of temporal unit}
\label{sec:reference_scale}

We first consider the continuous-to-continuous problem in
Table~\ref{tab:intro_questions}. A change of temporal unit is simply a
change of coordinates: the demographic process is unchanged, whereas
the age variable and all time-dependent quantities are rescaled.

Consider a linear change of time scale
\begin{equation}
	\widetilde a=ca,
	\qquad c>0,
	\label{eq:S_scale_change}
\end{equation}
where \(a\) and \(\widetilde a\) denote the same physical age expressed
in different temporal units. For example, \(c=12\) when passing from
years to months, and \(c=5\) when passing from five-year units to
years.

Survivorship is a probability and  transforms by
composition with the scaling law \eqref{eq:S_scale_change}:
\begin{equation*}
\widetilde l(\widetilde a)
=
l\!\left(\frac{\widetilde a}{c}\right).
\end{equation*}
Fertility is a rate per unit time and transforms according to
\begin{equation*}
\widetilde m(\widetilde a)
=
\frac{1}{c}
m\!\left(\frac{\widetilde a}{c}\right).
\end{equation*}
Consequently, from \eqref{eq:net_reproductive_function} we obtain
\begin{equation}
\widetilde V(\widetilde a)
=
\frac{1}{c}
V\!\left(\frac{\widetilde a}{c}\right).
\label{eq:decomposition_scale_invariant}
\end{equation}
The intrinsic growth rate is also a rate per unit time, so
$
\widetilde r=\frac{r}{c}.
\label{eq:time_unit_growth_rate}
$
Using
\eqref{eq:decomposition_scale_invariant} and the change of variable $\widetilde a=ca$,
we obtain
\begin{equation*}
\int_0^\infty
e^{-\widetilde r\widetilde a}
\widetilde V(\widetilde a)\,
d\widetilde a
 =
\int_0^\infty
e^{-(r/c)\widetilde a}
\frac{1}{c}
V\!\left(\frac{\widetilde a}{c}\right)
d\widetilde a
\nonumber\\
 =
\int_0^\infty
e^{-ra}V(a)\,da
=
1.
\end{equation*}

\noindent The reproductive-age density therefore transforms as
\begin{equation}
\widetilde p(\widetilde a)
=
e^{-\widetilde r\widetilde a}
\widetilde V(\widetilde a)
=
\frac{1}{c}
p\!\left(\frac{\widetilde a}{c}\right)
\label{eq:time_unit_density}
\end{equation}
or, in differential form, 
$
\widetilde p(\widetilde a)\,d\widetilde a
=
p(a)\,da.
$
Hence \(p\) and \(\widetilde p\) represent the same reproductive-age
probability measure in different temporal coordinates.

Generation time scales as age:
\begin{equation}
\widetilde T
=
\int_0^\infty
\widetilde a\,
\widetilde p(\widetilde a)\,
d\widetilde a
=
c\int_0^\infty
a\,p(a)\,da
=
cT.
\label{eq:time_unit_generation_time}
\end{equation}

The reference duration defined in Section~\ref{sec:continuous_demography}
is unchanged by a change of units. Under \(\widetilde a=ca\), its numerical
value becomes
\begin{equation}
\widetilde\tau=c\tau.
\label{eq:time_unit_reference_scale}
\end{equation}
The two coordinate expressions for cycle entropy are thus

\begin{equation*}
S
=
-\int_0^\infty
p(a)\log\!\bigl(\tau p(a)\bigr)\,da
 \qquad {\rm and} \qquad
\widetilde S
=
-\int_0^\infty
\widetilde p(\widetilde a)
\log\!\bigl(
\widetilde\tau\widetilde p(\widetilde a)
\bigr)\,d\widetilde a.
\end{equation*}

\begin{theorem}[Transformation under a linear time scaling]
\label{pro:reference_scale_change}

Let \(\widetilde a=ca\), \(c>0\). Then the cycle entropy $S$, generation time $T$ and evolutionary entropy $H$ scale according to:
\begin{enumerate}
	\item[(i)] $\widetilde S=S,$
		\item[(ii)] $\widetilde T=cT,$
	\item[(iii)] $\widetilde H=\dfrac{H}{c}.$
	\end{enumerate}
\end{theorem}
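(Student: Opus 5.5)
The plan is to substitute the transformation laws already derived in this section into the two coordinate expressions for the cycle entropy, and then read off the three claims in turn.

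\textbf{Part (ii).} This is exactly \eqref{eq:time_unit_generation_time}, obtained by the change of variable \(\widetilde a=ca\). It only needs to be cited.

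\textbf{Part (i).} Use the density law \eqref{eq:time_unit_density} together with the reference law \eqref{eq:time_unit_reference_scale}. The key observation is that the dimensionless product inside the logarithm is invariant:
\[
\widetilde\tau\,\widetilde p(\widetilde a)=c\tau\cdot\frac1c\,p\!\left(\frac{\widetilde a}{c}\right)=\tau\,p(a).
\]
The factor \(c\) contributed by the change of reference duration exactly cancels the Jacobian factor \(1/c\) in the density. Combining this with the measure identity \(\widetilde p(\widetilde a)\,d\widetilde a=p(a)\,da\), the substitution \(\widetilde a=ca\) in the integral defining \(\widetilde S\) gives the integral defining \(S\) term by term. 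Hence \(\widetilde S=S\).

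The change of variables is legitimate because the integrand is absolutely integrable. This follows from the standing assumption \(\int_0^\infty p\,|\log(\tau p)|\,da<\infty\): under the same substitution the absolute integrals coincide as well, so \(\widetilde p\log(\widetilde\tau\widetilde p)\) is integrable whenever \(p\log(\tau p)\) is. On the set where \(p=0\) the integrand is zero by convention, and this set is mapped onto the set where \(\widetilde p=0\), so it causes no trouble.

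\textbf{Part (iii).} This follows by dividing (i) by (ii):
\[
\widetilde H=\frac{\widetilde S}{\widetilde T}=\frac{S}{cT}=\frac{H}{c}.
\]
We use \(0<T<\infty\), which by (ii) gives \(0<\widetilde T<\infty\).

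\textbf{Main obstacle.} There is no real difficulty. The only point that needs care is to emphasise that \(\tau\) must be transformed to \(c\tau\), because it denotes a fixed physical duration. If one instead kept the numerical value \(\tau\) fixed in both coordinates, the result would be the classical shift \(\widetilde S=S+\log c\) of differential entropy under scaling. The invariance in (i) therefore rests precisely on the cancellation displayed above.
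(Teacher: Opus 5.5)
Your proposal is correct and matches the paper's proof: you cite the generation-time scaling for (ii), use the cancellation $\widetilde\tau\,\widetilde p(\widetilde a)=\tau\,p(\widetilde a/c)$ with the substitution $\widetilde a=ca$ for (i), and divide (i) by (ii) for (iii). Your extra remarks on absolute integrability, and on the $\log c$ shift that would appear if the numerical value of $\tau$ were held fixed, are accurate and usefully explain why the reference duration must transform as $\widetilde\tau=c\tau$.
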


\begin{proof}
From
\eqref{eq:time_unit_density} and
\eqref{eq:time_unit_reference_scale} it follows that
\[
\widetilde\tau\,
\widetilde p(\widetilde a)
=
c\tau
\frac{1}{c}
p\!\left(\frac{\widetilde a}{c}\right)
=
\tau
p\!\left(\frac{\widetilde a}{c}\right).
\]
Hence, after the change of variables \(\widetilde a=ca\), we obtain

\[
\begin{aligned}
	\widetilde S
	&=
	-\int_0^\infty
	\frac1c
	p\!\left(\frac{\widetilde a}{c}\right)
	\log\!\left(
	\tau
	p\!\left(\frac{\widetilde a}{c}\right)
	\right)
	\,d\widetilde a
	\\
	&=
	-\int_0^\infty
	p(a)\log(\tau p(a))\,da
	=
	S,
\end{aligned}
\]

\noindent establishing $(i)$.  Equation 
\eqref{eq:time_unit_generation_time} gives $(ii)$.  Finally, 
\begin{equation}
	\label{eq_EE_scaling}
\widetilde H
=
\frac{\widetilde S}{\widetilde T}
=
\frac{S}{cT}
=
\frac{H}{c},
\end{equation}
which proves  $(iii)$.
\end{proof}

The identity \eqref{eq_EE_scaling} expresses a change of time scale. The evolutionary entropy is unchanged once both values are
expressed in the same physical unit of time.
As an example, suppose that the generation time is \(4\) when age is
measured in five-year units. Expressed in years,
$
\widetilde T=5\times4=20.
$
If the cycle entropy is \(S\), the corresponding evolutionary entropy is
$
H=\frac{S}{4}
$
per five-year unit and
$
\widetilde H=\frac{S}{20}
$
per year. Since
$
\widetilde H=\frac{1}{5}H,
$
both expressions represent the same evolutionary entropy, differing
only in the unit of time. 

This coordinate change differs from the biological dilation considered in
\cite{DemetriusLegendreHarremoes2009}, where two life histories are compared in
a fixed age coordinate and the differential cycle entropy acquires a
logarithmic scaling term.

\begingroup
\begin{corollary}[Potential under changes of unit and reference]
\label{cor:potential_unit_reference}
\label{pro:reproductive_potential_invariance}
For the same demographic process expressed in the coordinates
\(\widetilde a=ca\), with \(c>0\) and the same reference duration
\(\widetilde\tau=c\tau\),
\begin{equation*}
\widetilde\Phi_{\widetilde\tau}=\frac{\Phi_\tau}{c}.
\end{equation*}
If instead the age coordinate is fixed and the reference duration is changed
from \(\tau_1>0\) to \(\tau_2>0\), then
\begin{equation*}
\begin{aligned}
S_{\tau_2}&=S_{\tau_1}-\log(\tau_2/\tau_1),\\
H_{\tau_2}&=H_{\tau_1}-\frac{\log(\tau_2/\tau_1)}{T},
\qquad
\Phi_{\tau_2}=\Phi_{\tau_1}+\frac{\log(\tau_2/\tau_1)}{T}.
\end{aligned}
\end{equation*}
The growth rate \(r\) is unchanged by this replacement of the reference.
\end{corollary}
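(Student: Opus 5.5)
The plan is to derive each part from the growth decomposition of Proposition~\ref{pro:potential_decomposition}, \(r=H_\tau+\Phi_\tau\), together with Theorem~\ref{pro:reference_scale_change}, rather than computing \(\Phi\) directly.

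For the change of unit, I first check that the Euler--Lotka rate transforms as \(\widetilde r=r/c\). This was shown in Section~\ref{sec:reference_scale}: the transformed integral equals one at \(\widetilde r=r/c\), and the real root is unique. Next, the transformed density \(\widetilde p(\widetilde a)=e^{-\widetilde r\widetilde a}\widetilde V(\widetilde a)\) satisfies the hypotheses of Proposition~\ref{pro:potential_decomposition} in the new coordinates. The finiteness assumptions carry over by the change of variables already used for \(\widetilde S\) and \(\widetilde T\). Hence \(\widetilde r=\widetilde H_{\widetilde\tau}+\widetilde\Phi_{\widetilde\tau}\). Theorem~\ref{pro:reference_scale_change}(iii) gives \(\widetilde H=H/c\), so
\[
\widetilde\Phi_{\widetilde\tau}=\frac{r}{c}-\frac{H_\tau}{c}=\frac{\Phi_\tau}{c}.
\]
As a direct check, one can also use \(\widetilde\tau\widetilde V(\widetilde a)=\tau V(a)\) with \(\widetilde a=ca\) and \(\widetilde p\,d\widetilde a=p\,da\). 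The integral in \(\widetilde\Phi\) then equals the one in \(\Phi\), while the prefactor is \(1/\widetilde T=1/(cT)\).

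For the change of reference at a fixed age coordinate, \(p\), \(r\) and \(T\) do not involve \(\tau\), so they are unchanged. Writing \(\log(\tau_2p)=\log(\tau_1p)+\log(\tau_2/\tau_1)\) and using \(\int p=1\) gives \(S_{\tau_2}=S_{\tau_1}-\log(\tau_2/\tau_1)\). Dividing by \(T\) yields the formula for \(H\). The formula for \(\Phi\) then follows from \(r=H_{\tau_2}+\Phi_{\tau_2}=H_{\tau_1}+\Phi_{\tau_1}\). It can also be obtained directly by splitting \(\log(\tau_2V)=\log(\tau_1V)+\log(\tau_2/\tau_1)\) on \(\{p>0\}\), where the integrand convention matters.

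The only delicate step is justifying the splitting of integrals. Each term must be finite, and this holds by the standing assumption \(\int p\,|\log(\tau p)|<\infty\) together with \(T<\infty\) and \(\int p=1\). The convention that the integrand vanishes where \(p=0\) also has to be respected, so that \(\log V=-\infty\) on \(\{p=0\}\) does not cause trouble. Everything else is routine algebra.
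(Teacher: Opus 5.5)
Your proposal is correct and matches the paper's proof essentially step for step. For the change of unit you subtract in \(r=H_\tau+\Phi_\tau\) using \(\widetilde H=H/c\) and \(\widetilde r=r/c\); for the change of reference you split \(\log(\tau_2p)=\log(\tau_1p)+\log(\tau_2/\tau_1)\), use \(\int p=1\), divide by \(T\), and conclude by subtraction since \(r\) does not involve \(\tau\). Your direct checks (via \(\widetilde\tau\widetilde V(\widetilde a)=\tau V(a)\)) and your integrability remarks are sound additions but not needed.
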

The proof is given in Appendix~\ref{app:potential_proofs}.

\par\endgroup

\begin{rem}
	\label{rem:entropy_potential_covariance}

	A change of temporal unit does not alter the demographic description; it merely changes the coordinate used to measure age, for example from
	years to months. By contrast, temporal aggregation replaces a
	continuous reproductive-age distribution by probabilities associated
	with age classes of different  widths. This process may discard information about
	reproductive timing within each class. The aggregation problem is the
	subject of the next section.
\end{rem}

\section{Temporal aggregation and continuous reconstruction}
\label{sec:temporal_discretization}

We now address the continuous--discrete correspondences in
Table~\ref{tab:intro_questions}. We first study temporal aggregation,
identifying the information preserved and the information lost when a
continuous reproductive-age distribution is replaced by age classes.
We then consider the inverse problem: reconstructing the continuous
distribution from the discrete data.
Unless stated otherwise, age is measured in years.

\subsection{From continuous age to temporal classes}
\label{subsec:continuous_to_discrete}

Let \(h>0\) denote the class width, and partition
\([0,\infty)\) into the intervals
\begin{equation*}
	I_j^{(h)}
	=
	[jh,(j+1)h),
	\qquad
	j=0,1,2,\ldots.
\end{equation*}

\begin{rem}
	\label{rem:translated_temporal_grids}
	All the constructions below remain valid for a translated partition
	\[
	I_j^{(h,a_0)}
	=
	[a_0+jh,a_0+(j+1)h),
	\qquad
	j=0,1,2,\ldots,
	\]
	provided all meshes use the same origin \(a_0\) and \(p=0\) almost everywhere below \(a_0\). For simplicity we take here 
	\(a_0=0\). In the constant-hazard model of
	Section~\ref{sec:constant_hazard}, however, reproduction begins at age
	\(A\), so the natural choice will be \(a_0=A\).
\end{rem}

The reproductive probability assigned to the \(j\)-th class is
\begin{equation*}
	P_j^{(h)}
	=
	\int_{I_j^{(h)}}p(a)\,da.
\end{equation*}
Since \(p\) is a probability density, we have 
$	\sum_{j=0}^\infty P_j^{(h)}=1.$

The probabilities \(P_j^{(h)}\) determine the total reproductive
contribution in each class, but not its distribution within the class.
Accordingly, for every \(P_j^{(h)}>0\), we define the conditional density
\begin{equation}
	\varphi_j^{(h)}(a)
	=
	\frac{p(a)}{P_j^{(h)}},
	\qquad
	a\in I_j^{(h)}.
	\label{eq:general_conditional_class_density}
\end{equation}
Note that 
$
	\int_{I_j^{(h)}} \varphi_j^{(h)}(a) \,da = 1.
	$
Also, from the expression
\begin{equation}
	p(a)
	=
	P_j^{(h)}\varphi_j^{(h)}(a),
	\qquad
	a\in I_j^{(h)}
	\label{eq:general_class_factorization}
\end{equation}
it is clear that 
\(P_j^{(h)}\) describes the distribution of reproductive
contribution between classes, whereas \(\varphi_j^{(h)}\) describes its
distribution within each class. The continuous-to-discrete process of temporal aggregation 
\[
p(a)
\longmapsto
\left(P_j^{(h)}\right)_{j\ge0},
\]
therefore discards the family of conditional densities
$
\left(\varphi_j^{(h)}\right)_{j\ge0},
$ keeping only the probability mass of each class.
This also explains why a coarse age-class model cannot, in general, be
refined uniquely: \(P_j^{(h)}\) determines the total reproductive
probability in \(I_j^{(h)}\), but not its allocation among finer
subintervals.

\subsection{From temporal classes to continuous quantities}
\label{subsec:discrete_to_continuous}
\label{sec:general_exact_partition_reconstruction}

We now consider the inverse problem. The class probabilities
\(P_j^{(h)}\) do not, in general, determine the continuous
reproductive-age density. The missing information is precisely the
family of within-class distributions introduced above.

The Shannon entropy of the class probabilities is
\begin{equation}
S_h
=
-\sum_{j=0}^\infty
P_j^{(h)}\log P_j^{(h)},
\label{eq:discrete_Shannon_entropy}
\end{equation}
where the standing convention  \(0\log0=0\) is assumed.

For each non-empty class, define the within-class cycle entropy by
\begin{equation}
C_j^{(h)}
=
-
\int_{I_j^{(h)}}
\varphi_j^{(h)}(a)
\log\!\bigl(\tau \varphi_j^{(h)}(a)\bigr)\,da,
\label{eq:general_within_class_entropy}
\end{equation}
where \(\tau\) denotes the numerical value of the fixed one-year
reference duration. Since age is measured in years throughout this section, we have
\(\tau=1\).

The following chain-rule identity \cite{Shannon1948} gives the exact relation between continuous
cycle entropy and temporal aggregation.

\begin{theorem}[Entropy decomposition under temporal aggregation]
\label{thm:general_exact_entropy_reconstruction}

For every temporal partition for which the relevant entropies are
finite,
\begin{equation}
S
=
S_h
+
\sum_{j=0}^\infty
P_j^{(h)}C_j^{(h)}.
\label{eq:general_exact_entropy_partition}
\end{equation}
\end{theorem}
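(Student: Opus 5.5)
The plan is to split the integral defining \(S\) over the classes \(I_j^{(h)}\) and substitute the factorisation \eqref{eq:general_class_factorization} inside the logarithm. Since age is measured in years, \(\tau=1\), but I will carry \(\tau\) along, because the identity does not depend on its value. Classes with \(P_j^{(h)}=0\) have \(p=0\) almost everywhere on \(I_j^{(h)}\). They contribute nothing to \(S\) and, by the convention \(0\log 0=0\), nothing to \(S_h\), so I discard them at the outset. The result is
\[
S=-\sum_{j}\int_{I_j^{(h)}}p(a)\log\bigl(\tau p(a)\bigr)\,da .
\]
This interchange of the integral over \([0,\infty)\) with the sum over classes is justified by countable additivity. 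The standing assumption \(\int_0^\infty p(a)|\log(\tau p(a))|\,da<\infty\) makes the integrand absolutely integrable, so dominated convergence applies.

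On each non-empty class I write \(\tau p(a)=P_j^{(h)}\cdot\tau\varphi_j^{(h)}(a)\). This splits the logarithm as
\[
\log\bigl(\tau p(a)\bigr)=\log P_j^{(h)}+\log\bigl(\tau\varphi_j^{(h)}(a)\bigr),
\]
valid wherever \(p(a)>0\); where \(p(a)=0\), both integrands are zero by convention. Multiplying by \(p(a)=P_j^{(h)}\varphi_j^{(h)}(a)\) and integrating over \(I_j^{(h)}\) gives two terms. The first uses \(\int_{I_j^{(h)}}\varphi_j^{(h)}=1\):
\[
-P_j^{(h)}\log P_j^{(h)}\int_{I_j^{(h)}}\varphi_j^{(h)}(a)\,da=-P_j^{(h)}\log P_j^{(h)}.
\]
The second is
\[
-P_j^{(h)}\int_{I_j^{(h)}}\varphi_j^{(h)}(a)\log\bigl(\tau\varphi_j^{(h)}(a)\bigr)\,da=P_j^{(h)}C_j^{(h)},
\]
by \eqref{eq:general_within_class_entropy}. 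Summing over \(j\) then gives \eqref{eq:general_exact_entropy_partition}, provided the total sum may be split into the two sums \(S_h\) and \(\sum_j P_j^{(h)}C_j^{(h)}\).

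The only nontrivial step is this splitting of a convergent series into two series. I expect it to be the main obstacle, because the terms \(P_j^{(h)}C_j^{(h)}\) can have either sign (see Remark~\ref{rem:differential_entropy_sign}), so the two series cannot simply be added term by term without checking convergence. I would handle it with the finiteness hypothesis in the statement:
\begin{enumerate}
\item[(a)] \(S\) is finite, with absolutely convergent class integrals.
\item[(b)] \(S_h\) is finite, and this series has nonnegative terms, so it converges absolutely.
\end{enumerate}
Then \(\sum_j P_j^{(h)}C_j^{(h)}\) is the termwise difference of two convergent series, so it converges and equals \(S-S_h\). Rearranging gives exactly \eqref{eq:general_exact_entropy_partition}. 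If one instead assumes convergence of \(\sum_j P_j^{(h)}C_j^{(h)}\), the same argument recovers finiteness of \(S_h\). This is the chain rule for entropy of \cite{Shannon1948}, applied to the class index together with the within-class position.
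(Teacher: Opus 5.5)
Your proposal is correct and follows the same route as the paper: split the integral defining $S$ over the classes, factor $p=P_j^{(h)}\varphi_j^{(h)}$ inside the logarithm, use $\int_{I_j^{(h)}}\varphi_j^{(h)}\,da=1$, and sum over $j$. Your extra care in justifying the interchange of sum and integral, and in splitting the convergent series using the finiteness of $S$ and of the nonnegative series $S_h$, supplies rigour that the paper's proof leaves implicit.
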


\begin{proof}
For \(a\in I_j^{(h)}\),
\eqref{eq:general_class_factorization} gives
\[
\log\!\bigl(\tau p(a)\bigr)
=
\log P_j^{(h)}
+
\log\!\bigl(\tau \varphi_j^{(h)}(a)\bigr).
\]
Hence
\begin{align*}
S
&=
-
\sum_{j=0}^{\infty}
\int_{I_j^{(h)}}
p(a)\log\!\bigl(\tau p(a)\bigr)\,da
\\
&=
-
\sum_{j=0}^{\infty}
P_j^{(h)}
\int_{I_j^{(h)}}
\varphi_j^{(h)}(a)
\left[
\log P_j^{(h)}
+
\log\!\bigl(\tau \varphi_j^{(h)}(a)\bigr)
\right]
\,da
\\
&=
-
\sum_{j=0}^{\infty}
P_j^{(h)}\log P_j^{(h)}
+
\sum_{j=0}^{\infty}
P_j^{(h)}C_j^{(h)},
\end{align*}
which, by \eqref{eq:discrete_Shannon_entropy},  is
\eqref{eq:general_exact_entropy_partition}.
\end{proof}

The decomposition
\eqref{eq:general_exact_entropy_partition}
separates the contributions from between and within the temporal
classes. The term \(S_h\) is the Shannon entropy of the class
probabilities, whereas the term 
$
\sum_{j=0}^\infty
P_j^{(h)}C_j^{(h)}
$
accounts for the within-class contribution.
Consequently, the class probabilities alone do not determine the
continuous cycle entropy. Exact reconstruction requires the
within-class entropies \(C_j^{(h)}\), which can be computed from the conditional
densities \(\varphi_j^{(h)}\); the entropies alone do not determine those densities.

Generation time admits an analogous decomposition. For each
\(P_j^{(h)}>0\), define the conditional mean reproductive age by
\begin{equation}
	\bar a_j^{(h)}
	=
	\frac{
		\displaystyle
		\int_{I_j^{(h)}}a\,p(a)\,da
	}{
		P_j^{(h)}
	}
	=
	\int_{I_j^{(h)}}
	a\,\varphi_j^{(h)}(a)\,da.
	\label{eq:class_mean_age}
\end{equation}

\begin{proposition}[Exact reconstruction of generation time]
	\label{pro:generation_time_exact_invariance}
	\label{pro:general_exact_generation_reconstruction}
	
	For every \(h>0\),
	\begin{equation}
			T
			=
			\sum_{j=0}^\infty
			P_j^{(h)}\bar a_j^{(h)}.
		\label{eq:generation_time_exact_invariance}
	\end{equation}
\end{proposition}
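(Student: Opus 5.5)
The plan is to mirror the proof of Theorem~\ref{thm:general_exact_entropy_reconstruction}, using the factorisation \eqref{eq:general_class_factorization} instead of the logarithmic identity. The partition \(\{I_j^{(h)}\}_{j\ge0}\) consists of disjoint intervals covering \([0,\infty)\). Countable additivity of the integral therefore lets us write the generation time as
\[
T=\int_0^\infty a\,p(a)\,da=\sum_{j=0}^\infty\int_{I_j^{(h)}}a\,p(a)\,da .
\]
The integrand \(a\,p(a)\) is non-negative, so this interchange of sum and integral is justified by monotone convergence (Tonelli). It requires nothing beyond the standing assumption \(0<T<\infty\).

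Next, split the classes according to whether \(P_j^{(h)}\) vanishes.

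\begin{itemize}
\item If \(P_j^{(h)}>0\), the definition \eqref{eq:class_mean_age} gives directly
\[
\int_{I_j^{(h)}}a\,p(a)\,da=P_j^{(h)}\bar a_j^{(h)} .
\]
\item If \(P_j^{(h)}=0\), then \(p=0\) almost everywhere on \(I_j^{(h)}\). Hence \(\int_{I_j^{(h)}}a\,p(a)\,da=0\), and the corresponding term on the right-hand side of \eqref{eq:generation_time_exact_invariance} is read as zero. This follows the same convention as \(0\log 0=0\), since \(\bar a_j^{(h)}\) is undefined there.
\end{itemize}

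Summing over \(j\) then yields \eqref{eq:generation_time_exact_invariance}.

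The only point needing care, and the nearest thing to an obstacle, is the treatment of empty classes and the convergence of the series. Both are settled by non-negativity: every term \(P_j^{(h)}\bar a_j^{(h)}\) lies in \([0,\infty)\), and the series sums to the finite value \(T\). As a consistency check, \(jh\le\bar a_j^{(h)}<(j+1)h\), so \eqref{eq:generation_time_exact_invariance} also shows
\[
\sum_j jh\,P_j^{(h)}\le T<\sum_j (j+1)h\,P_j^{(h)} .
\]
This is consistent with the discrete generation time computed from the class probabilities alone; that quantity is recovered only up to the within-class timing information carried by \(\bar a_j^{(h)}\).
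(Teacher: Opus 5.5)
Your proof is correct and follows essentially the same argument as the paper: use \eqref{eq:class_mean_age} to identify each term \(P_j^{(h)}\bar a_j^{(h)}\) with \(\int_{I_j^{(h)}}a\,p(a)\,da\), then sum over the partition to recover \(\int_0^\infty a\,p(a)\,da=T\). Your treatment of empty classes and your Tonelli justification for interchanging sum and integral make explicit what the paper leaves implicit.
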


\begin{proof}
Using \eqref{eq:class_mean_age},
\begin{align*}
\sum_{j=0}^\infty
P_j^{(h)}\bar a_j^{(h)}
&=
\sum_{j=0}^\infty
\int_{I_j^{(h)}}a\,p(a)\,da
\\
&=
\int_0^\infty a\,p(a)\,da
=
T.
\end{align*}
\end{proof}

For comparison with a conventional  age-class model, let $
a_j^{\mathrm{rep}}\in I_j^{(h)}
$
be a representative age for the \(j\)-th class and define
the mean within-class displacement
\begin{equation*}
\delta_j^{(h)}
=
\int_{I_j^{(h)}}
\left(
a-a_j^{\mathrm{rep}}
\right)
\varphi_j^{(h)}(a)\,da.
\end{equation*}
Then
$
\bar a_j^{(h)}
=
a_j^{\mathrm{rep}}
+
\delta_j^{(h)},$
so that
\begin{equation}
T
=
\sum_{j=0}^\infty
P_j^{(h)}
\left(
a_j^{\mathrm{rep}}
+
\delta_j^{(h)}
\right).
\label{eq:general_exact_generation_partition}
\end{equation}

Combining
\eqref{eq:general_exact_entropy_partition} and
\eqref{eq:general_exact_generation_partition} gives the exact
finite-partition representation
\begin{equation}
H
=
\frac{
S_h
+
\displaystyle\sum_{j=0}^\infty
P_j^{(h)}C_j^{(h)}
}{
\displaystyle\sum_{j=0}^\infty
P_j^{(h)}
\left(
a_j^{\mathrm{rep}}
+
\delta_j^{(h)}
\right)
}.
\label{eq:general_exact_H_partition}
\end{equation}

All ages in
\eqref{eq:general_exact_H_partition} are  measured in
years. Hence the denominator is the physical generation time,
and the resulting evolutionary entropy is expressed per year,
independently of the class width \(h\).

The preceding formulas are expressed in physical units. In a
conventional age-class model, however, representative ages are often
given in class units.  Write
\begin{equation*}
a_j^{\mathrm{rep}}
=
h n_j,
\end{equation*}
where \(n_j\) is the representative age measured in class units, and
define
\begin{equation*}
\varepsilon_j^{(h)}
=
\frac{\delta_j^{(h)}}{h}.
\end{equation*}
Then
$
a_j^{\mathrm{rep}}
+
\delta_j^{(h)}
=
h
\left(
n_j+\varepsilon_j^{(h)}
\right).
$ Hence the generation time in class units is 
\begin{equation}
T_h^{\mathrm{class}}
=
\sum_{j=0}^\infty
P_j^{(h)}
\left(
n_j+\varepsilon_j^{(h)}
\right),
\label{eq:generation_time_class_units}
\end{equation}

so that $T=hT_h^{\mathrm{class}}.$
Substituting into
\eqref{eq:general_exact_H_partition}
gives
\begin{equation}
			H
		=
		\frac1h
		\frac{
			S_h
			+
			\displaystyle\sum_{j=0}^\infty
			P_j^{(h)}C_j^{(h)}
		}{
			T_h^{\mathrm{class}}
		}.
	\label{eq:evolutionary_entropy_class_units}
\end{equation}

\begingroup
\begin{corollary}[Exact reconstruction of reproductive potential]
\label{cor:potential_exact_reconstruction}
Consider a temporal partition of the same reproductive process \(V\),
retaining its Euler--Lotka rate \(r\). Let \(P_j^{(h)}\), \(C_j^{(h)}\)
and \(T_h^{\mathrm{class}}\) be the class probabilities, within-class cycle
entropies and generation time in class units defined above, with the fixed
one-year reference. The reconstructed potential is
\begin{equation}
\Phi_h^{\mathrm{rec}}
:=r-\frac{S_h+\sum_jP_j^{(h)}C_j^{(h)}}{hT_h^{\mathrm{class}}}
=\Phi.
\label{eq:potential_exact_reconstruction}
\end{equation}
Thus the same within-class entropies and conditional mean ages that
recover \(H\) also recover \(\Phi\), provided \(r\) is retained.
\end{corollary}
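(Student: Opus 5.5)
The corollary follows by combining three earlier results: the exact entropy decomposition of Theorem~\ref{thm:general_exact_entropy_reconstruction}, the exact generation-time reconstruction of Proposition~\ref{pro:general_exact_generation_reconstruction} in class units, and the growth decomposition of Proposition~\ref{pro:potential_decomposition}. The key observation is that the temporal partition is only a rewriting of the same density \(p(a)=e^{-ra}V(a)\) with the same Euler--Lotka rate \(r\). No new demographic object is introduced, so the reconstructed quantities must coincide with the continuous ones.

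\emph{Numerator.} Apply Theorem~\ref{thm:general_exact_entropy_reconstruction} with the one-year reference \(\tau=1\). This gives
\[
S_h+\sum_jP_j^{(h)}C_j^{(h)}=S,
\]
and the finiteness assumptions of the corollary are exactly those required by that theorem.

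\emph{Denominator.} Use \eqref{eq:generation_time_class_units} together with \(a_j^{\mathrm{rep}}+\delta_j^{(h)}=\bar a_j^{(h)}\) and Proposition~\ref{pro:general_exact_generation_reconstruction}. These yield \(hT_h^{\mathrm{class}}=T\), the physical generation time in years.

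Substituting both identities, the subtracted quotient in \eqref{eq:potential_exact_reconstruction} equals \(S/T=H\), equivalently \eqref{eq:evolutionary_entropy_class_units}. Hence \(\Phi_h^{\mathrm{rec}}=r-H\). Proposition~\ref{pro:potential_decomposition} gives \(r=H+\Phi\), so \(r-H=\Phi\), as claimed.

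The only point requiring care is the bookkeeping of hypotheses, since the algebra itself is routine. Two things must be checked.
\begin{enumerate}
\item The rate \(r\) in \(\Phi_h^{\mathrm{rec}}\) must be the same Euler--Lotka root used to define \(p\). Otherwise \(P_j^{(h)}\) would be computed from a different normalised density, and the identity \(r=H+\Phi\) would not apply.
\item The integrability conditions must make every rearrangement of the infinite sums legitimate. Specifically, \(\int p\,|\log(\tau p)|<\infty\) and \(T<\infty\) justify splitting the integrals over the classes \(I_j^{(h)}\) and summing term by term. The integrand \(p\log(\tau V)\) in \(\Phi\) is finite because \(\log(\tau V)=ra+\log(\tau p)\) wherever \(p>0\).
\end{enumerate}

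I would note these points explicitly and otherwise present the proof as this short chain of equalities.
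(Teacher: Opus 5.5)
Your proof is correct and matches the paper's proof: Theorem~\ref{thm:general_exact_entropy_reconstruction} identifies the numerator with \(S\), equation~\eqref{eq:generation_time_class_units} together with Proposition~\ref{pro:general_exact_generation_reconstruction} gives \(hT_h^{\mathrm{class}}=T\), and then \(r-S/T=r-H=\Phi\) by the growth decomposition. Your added remarks on retaining the same Euler--Lotka root and on the standing integrability assumptions are consistent with the paper's hypotheses and do not change the argument.
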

The proof is given in Appendix~\ref{app:potential_proofs}.

\par\endgroup

Equation \eqref{eq:evolutionary_entropy_class_units} shows how the
temporal unit enters a discrete representation. The quotient
\[
\frac{
S_h+\sum_jP_j^{(h)}C_j^{(h)}
}{
T_h^{\mathrm{class}}
}
\]
is the evolutionary entropy rate per class interval. Division by \(h\)
expresses the same rate per year.

To provide a specific example, a generation time of four classes with \(h=5\) years is
$
T=5\times4=20\ {\rm years}.
$
The equivalent annual model has \(h=1\) and a generation time of
twenty classes. Since both models reconstruct the same cycle entropy,
equation \eqref{eq:evolutionary_entropy_class_units} yields the same
evolutionary entropy per year.

In summary, the continuous-to-discrete passage
\[
p
\longmapsto
\left(P_j^{(h)}\right)_{j\ge0}
\]
retains only the reproductive probability assigned to each temporal
class. The within-class distributions \(\varphi_j^{(h)}\) are discarded. In the reverse direction, 
the pair
\[
\left\{
P_j^{(h)},\,\varphi_j^{(h)}
\right\}_{j\ge0}
\]
determines the original density \(p\) through
\eqref{eq:general_class_factorization}. Thus temporal aggregation is
exactly reversible only if the within-class distributions are retained.
Reconstructing the full density is, however, stronger than necessary if the goal is 
to recover only the quantities \(S\), \(T\), and \(H\): in that case it suffices to
know the within-class entropies \(C_j^{(h)}\) and the conditional mean
ages \(\bar a_j^{(h)}\), or equivalently the displacements
\(\delta_j^{(h)}\).

Section~\ref{sec:unresolved_within_class_structure} considers the common situation in which the
within-class structure is unavailable. 

\section{Change of temporal mesh}
\label{sec:change_temporal_mesh}
\label{sec:exact_commutative_principle}

We now consider the discrete-to-discrete problem in
Table~\ref{tab:intro_questions}. Suppose the same reproductive process
is represented on two different, uniformly nested temporal meshes.
The demographic process is unchanged; only the
temporal partition differs.

The ordinary Shannon entropy of the class probabilities is generally
different on the two meshes. The relevant question is whether the two
representations nevertheless reconstruct the same cycle entropy, the
same physical generation time, and hence the same evolutionary entropy
per year.

We denote by $h_f$ the fine mesh and by $h_c$ the coarse mesh, so that 
$
	h_c=mh_f
$
for some $m \in \mathbb{N}$. Every class of width \(h_c\) is the union of \(m\) consecutive
classes of width \(h_f\):
\begin{equation*}
	I_j^{(h_c)}
	=
	\bigcup_{s=0}^{m-1}
	I_{jm+s}^{(h_f)}.
\end{equation*}

\subsection{Coarsening and refinement}
\label{subsec:coarsening_refinement}

Passing from the finer mesh \(h_f\) to the coarser mesh \(h_c\) is
immediate. The reproductive probabilities satisfy
\begin{equation*}
		P_j^{(h_c)}
		=
		\sum_{s=0}^{m-1}
		P_{jm+s}^{(h_f)}.
\end{equation*}
Thus coarsening is determined entirely by the finer class
probabilities.

The reverse operation is not determined by the coarse probabilities
alone. For each class with \(P_j^{(h_c)}>0\), define the splitting
coefficients
\begin{equation*}
	\alpha_{j,s}
	=
	\frac{
		P_{jm+s}^{(h_f)}
	}{
		P_j^{(h_c)}
	},
	\qquad
	s=0,\ldots,m-1.
\end{equation*}
Then
\begin{equation}
	\alpha_{j,s}\ge0,
	\qquad
	\sum_{s=0}^{m-1}\alpha_{j,s}=1,
	\label{eq:mesh_splitting_normalization}
\end{equation}
and
\begin{equation}
		P_{jm+s}^{(h_f)}
		=
		P_j^{(h_c)}\alpha_{j,s}.
	\label{eq:exact_mesh_refinement}
\end{equation}

The coefficients \(\alpha_{j,s}\) specify how the reproductive
probability of the \(j\)-th coarse class is distributed among its
\(m\) finer subclasses. The coarse probabilities
\(\bigl(P_j^{(h_c)}\bigr)_{j\ge0}\) do not determine these
coefficients. Hence a coarse discrete model does not generally
determine its own refinement.

If both discrete representations arise from the same continuous
reproductive-age density, then the splitting coefficients are
determined by the conditional densities of
Section~\ref{subsec:continuous_to_discrete}:
\begin{equation}
		\alpha_{j,s}
		=
		\int_{I_{jm+s}^{(h_f)}}
		\varphi_j^{(h_c)}(a)\,da.
	\label{eq:splitting_coefficients_from_conditional_density}
\end{equation}

Thus the within-class information identified in
Section~\ref{sec:temporal_discretization} is exactly the information
needed to reconstruct a refinement consistent with the same
demographic process.

\subsection{Cycle entropy under a change of mesh}
\label{subsec:entropy_mesh_change}

The splitting coefficients determine the relation between the 
Shannon entropies of the coarse and fine class distributions.  For each coarse
class, define the {\em refinement entropy} by
\begin{equation*}
	E_j^{(h_c,h_f)}
	=
	-\sum_{s=0}^{m-1}
	\alpha_{j,s}\log\alpha_{j,s}.
\end{equation*}

\begin{theorem}[Entropy decomposition under mesh refinement]
	\label{thm:entropy_mesh_refinement}
	
	Let \(h_c=mh_f\). Then
	\begin{equation}
			S_{h_f}
			=
			S_{h_c}
			+
			\sum_{j=0}^\infty
			P_j^{(h_c)}
			E_j^{(h_c,h_f)}.
		\label{eq:entropy_mesh_refinement}
	\end{equation}
\end{theorem}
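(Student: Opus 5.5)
The plan is to run the same chain-rule argument as in the proof of Theorem~\ref{thm:general_exact_entropy_reconstruction}, now with a purely discrete inner level. The fine classes are grouped into consecutive blocks of $m$. Each fine probability factorises as in \eqref{eq:exact_mesh_refinement}, so that $P_{jm+s}^{(h_f)}=P_j^{(h_c)}\alpha_{j,s}$. Every index $i\ge0$ is written uniquely as $i=jm+s$ with $0\le s\le m-1$, which reorganises the series defining $S_{h_f}$ as a double sum over $j$ and $s$.

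Next, I will deal with zero masses. If $P_j^{(h_c)}=0$, then all the $P_{jm+s}^{(h_f)}$ vanish, and under the convention $0\log0=0$ that block contributes nothing to either side. Such $j$ can therefore be dropped and the $\alpha_{j,s}$ used only where they are defined. Within a block with $P_j^{(h_c)}>0$, terms with $\alpha_{j,s}=0$ also contribute zero. For the remaining terms I will use
\[
\log P_{jm+s}^{(h_f)}=\log P_j^{(h_c)}+\log\alpha_{j,s}.
\]
Substituting this gives
\[
S_{h_f}=-\sum_{j}\sum_{s=0}^{m-1}P_j^{(h_c)}\alpha_{j,s}\log P_j^{(h_c)}-\sum_j P_j^{(h_c)}\sum_{s=0}^{m-1}\alpha_{j,s}\log\alpha_{j,s}.
\]
The normalisation \eqref{eq:mesh_splitting_normalization} then collapses the first double sum to $S_{h_c}$, and the second sum is by definition $\sum_j P_j^{(h_c)}E_j^{(h_c,h_f)}$.

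The only step that needs care is justifying the rearrangement and the splitting of an infinite series into two. This is where I expect the main (mild) obstacle. All terms $-P\log P$ with $P\in[0,1]$ are non-negative, and so are the terms $-P_j^{(h_c)}\alpha_{j,s}\log\alpha_{j,s}$ and $-P_j^{(h_c)}\alpha_{j,s}\log P_j^{(h_c)}$. Tonelli's theorem for non-negative double series therefore allows regrouping and splitting with values in $[0,\infty]$, without any cancellation issues. Under the standing assumption that the class entropies are finite, every quantity in \eqref{eq:entropy_mesh_refinement} is finite, and the identity holds as an equality of real numbers. As a consequence it also gives $S_{h_f}\ge S_{h_c}$, since each $E_j^{(h_c,h_f)}\ge0$.
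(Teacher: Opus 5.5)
Your proposal is correct and follows the paper's proof. You substitute $P_{jm+s}^{(h_f)}=P_j^{(h_c)}\alpha_{j,s}$, split the logarithm, and collapse the first double sum using $\sum_s\alpha_{j,s}=1$. Your treatment of zero-mass classes and the Tonelli justification for regrouping the non-negative series are sound, and they make explicit details the paper leaves implicit.
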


\begin{proof}
	Using \eqref{eq:exact_mesh_refinement},
	\begin{align*}
		S_{h_f}
		&=
		-\sum_{j=0}^\infty
		\sum_{s=0}^{m-1}
		P_j^{(h_c)}
		\alpha_{j,s}
		\log\!\left(
		P_j^{(h_c)}\alpha_{j,s}
		\right)
\\
		&=
		-\sum_{j=0}^\infty
		P_j^{(h_c)}
		\log P_j^{(h_c)}
		\sum_{s=0}^{m-1}\alpha_{j,s}
\\
		&\quad
		-
		\sum_{j=0}^\infty
		P_j^{(h_c)}
		\sum_{s=0}^{m-1}
		\alpha_{j,s}\log\alpha_{j,s}.
\end{align*}
	Using \eqref{eq:mesh_splitting_normalization} and
	\eqref{eq:discrete_Shannon_entropy} yields
	\eqref{eq:entropy_mesh_refinement}, finishing the proof.
\end{proof}

Theorem~\ref{thm:entropy_mesh_refinement} is the discrete analogue of
	Theorem~\ref{thm:general_exact_entropy_reconstruction}.
	Informally, it states that the fine Shannon entropy equals the sum of the coarse Shannon entropy with the refinement entropy.

 Refining the mesh resolves the conditional entropy within each coarse
class. Since $0
\leq
E_j^{(h_c,h_f)}
\leq
\log m,$ it follows that 

\begin{equation}
		S_{h_c}
		\leq
		S_{h_f}
		\leq
		S_{h_c}+\log m.
	\label{eq:entropy_refinement_bounds}
\end{equation}

Thus Shannon entropy is not mesh invariant. Except in
degenerate cases, refining the mesh $h$ changes \(S_h\), even though the
two discrete models represent the same demographic process.

The missing contribution is the within-class cycle entropy.
From \eqref{eq:general_conditional_class_density} and  \eqref{eq:splitting_coefficients_from_conditional_density}, we may write 
\[ 	\int_{I_{jm+s}^{(h_f)}}
\varphi_j^{(h_c)}(a)\,da = \frac{1}{P_j^{(h_c)}} \int_{I_{jm+s}^{(h_f)}}
p(a)\,da 
=
\frac{
	P_{jm+s}^{(h_f)}
}{
	P_j^{(h_c)}
} = 	\alpha_{j,s}.
\]
and therefore, for $a \in I_{jm+s}^{(h_f)} \subset I_{j}^{(h_c)}$, 
\[
\varphi_j^{(h_c)}(a) = \frac{
	P_{jm+s}^{(h_f)}
}{
	P_j^{(h_c)} 
} \varphi_{jm+s}^{(h_f)}(a)= 	\alpha_{j,s} \, \varphi_{jm+s}^{(h_f)}(a).
\]
 Recalling from \eqref{eq:general_within_class_entropy} 
the definition of within-class cycle entropy $C_j^{(h)}$, it follows that 

\begin{equation}
C_j^{(h_c)}
=
E_j^{(h_c,h_f)}
+
\sum_{s=0}^{m-1}
\alpha_{j,s}
C_{jm+s}^{(h_f)}.
\label{eq:within_class_entropy_mesh_decomposition}
\end{equation}

Thus the entropy of a coarse class decomposes into the entropy of its
subdivision and the mean entropy within the finer subclasses. 

Define the cycle entropy reconstructed from a mesh of width \(h\) by
\begin{equation*}
S_h^{\mathrm{rec}}
=
S_h
+
\sum_{j=0}^\infty
P_j^{(h)}C_j^{(h)}.
\end{equation*}

\begin{theorem}[Mesh invariance of reconstructed cycle entropy]
\label{thm:reconstructed_entropy_mesh_invariance}

For two nested representations of the same
reproductive process,
\begin{equation*}
S_{h_c}^{\mathrm{rec}}
=
S_{h_f}^{\mathrm{rec}}
=
S.
\end{equation*}
\end{theorem}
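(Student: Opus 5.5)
The quickest route is to apply Theorem~\ref{thm:general_exact_entropy_reconstruction} twice. That theorem holds for every temporal partition with finite entropies. With \(h=h_c\) it gives \(S=S_{h_c}+\sum_j P_j^{(h_c)}C_j^{(h_c)}=S_{h_c}^{\mathrm{rec}}\), and with \(h=h_f\) it gives \(S=S_{h_f}^{\mathrm{rec}}\). Both identities use the same density \(p\) and the same reference \(\tau=1\), so the two reconstructed entropies coincide. This already proves the statement, provided the finiteness hypotheses of Theorem~\ref{thm:general_exact_entropy_reconstruction} hold on both meshes. They do by the standing assumptions: \(\int p|\log(\tau p)|<\infty\) and finite class entropies.

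I would also give a purely discrete verification, since this shows how the nesting structure is responsible for the invariance. Start from \(S_{h_f}^{\mathrm{rec}}=S_{h_f}+\sum_{k}P_k^{(h_f)}C_k^{(h_f)}\) and reindex the fine classes as \(k=jm+s\). Substitute \(P_{jm+s}^{(h_f)}=P_j^{(h_c)}\alpha_{j,s}\) from \eqref{eq:exact_mesh_refinement}, and expand \(S_{h_f}\) using Theorem~\ref{thm:entropy_mesh_refinement}. This gives
\[
S_{h_f}^{\mathrm{rec}}=S_{h_c}+\sum_j P_j^{(h_c)}\Bigl[E_j^{(h_c,h_f)}+\sum_{s=0}^{m-1}\alpha_{j,s}C_{jm+s}^{(h_f)}\Bigr].
\]
By \eqref{eq:within_class_entropy_mesh_decomposition}, the bracket equals \(C_j^{(h_c)}\), so the right-hand side is \(S_{h_c}^{\mathrm{rec}}\).

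The main obstacle is technical rather than conceptual, and it concerns classes of zero mass. When \(P_j^{(h_c)}>0\) but some \(\alpha_{j,s}=0\), the fine conditional density \(\varphi_{jm+s}^{(h_f)}\) is undefined. Such terms must be dropped with the convention \(0\log 0=0\), and one checks that they contribute nothing to \eqref{eq:within_class_entropy_mesh_decomposition}. Classes with \(P_j^{(h_c)}=0\) are dropped in the same way.

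The second point is the rearrangement of infinite series: reindexing over \(k=jm+s\) and splitting the sums into separate pieces. For the \(E\)-terms this is safe because every term is nonnegative. The \(C\)-terms can have either sign, so rearranging them needs absolute convergence. This follows from \(\sum_k\int_{I_k}p\,|\log(\tau p)|<\infty\) together with the finiteness of \(S_{h_f}\).
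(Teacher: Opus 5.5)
Your proof is correct. Your second, purely discrete verification is essentially the paper's own argument. The paper multiplies \eqref{eq:within_class_entropy_mesh_decomposition} by \(P_j^{(h_c)}\), sums over \(j\), and uses \eqref{eq:exact_mesh_refinement} and Theorem~\ref{thm:entropy_mesh_refinement} to obtain \(S_{h_c}^{\mathrm{rec}}=S_{h_f}^{\mathrm{rec}}\). Only then does it invoke Theorem~\ref{thm:general_exact_entropy_reconstruction} to identify both sides with \(S\). You run the same chain starting from the fine side.

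Your first route is the genuinely different one, and it is shorter. The paper's final step already applies Theorem~\ref{thm:general_exact_entropy_reconstruction} to each side separately, so its discrete computation is logically redundant for the statement as written. Your argument also proves more: nesting plays no role in the conclusion, and any two partitions of the same density, nested or not, reconstruct the same \(S\).

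What the paper's computation buys is the mechanism at the level of class data. The increase \(S_{h_f}-S_{h_c}=\sum_jP_j^{(h_c)}E_j^{(h_c,h_f)}\) is exactly offset by the drop in mean within-class entropy. Hence the coarse summaries \((P_j^{(h_c)},C_j^{(h_c)})\) can be computed from the fine ones via the splitting coefficients, without returning to \(p\). Strictly, then, nesting is responsible for this bookkeeping rather than for the invariance itself, which slightly qualifies your motivating sentence.

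Your remarks on zero-mass subclasses are correct. So is your claim of absolute convergence of the \(C\)-series, which follows from \(|P_k^{(h_f)}C_k^{(h_f)}|\le\int_{I_k^{(h_f)}}p\,|\log(\tau p)|\,da+|P_k^{(h_f)}\log P_k^{(h_f)}|\). Both supply details the paper leaves implicit.
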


\begin{proof}
Multiplying
\eqref{eq:within_class_entropy_mesh_decomposition} by
\(P_j^{(h_c)}\), summing over \(j\), and applying 
\eqref{eq:exact_mesh_refinement}, we obtain

\[
\sum_{j=0}^\infty
P_j^{(h_c)}C_j^{(h_c)}
=
\sum_{j=0}^\infty
P_j^{(h_c)}E_j^{(h_c,h_f)}
+
\sum_{j=0}^\infty
\sum_{s=0}^{m-1}
P_{jm+s}^{(h_f)}
C_{jm+s}^{(h_f)}.
\]
Equation \eqref{eq:entropy_mesh_refinement} then gives
\[
S_{h_c}
+
\sum_jP_j^{(h_c)}C_j^{(h_c)}
=
S_{h_f}
+
\sum_kP_k^{(h_f)}C_k^{(h_f)}.
\]
By Theorem~\ref{thm:general_exact_entropy_reconstruction}, both sides
 equal the continuous cycle entropy \(S\).
\end{proof}

The essential distinction is that, although $S_{h_c}\neq S_{h_f}$ in general, 
$
S_{h_c}^{\mathrm{rec}}
=
S_{h_f}^{\mathrm{rec}}
$ always holds.
Thus the change in Shannon entropy under mesh refinement is exactly
compensated by the corresponding change in the within-class entropy.

\subsection{Generation time under a change of mesh}
\label{subsec:generation_mesh_change}

The same argument applies to generation time. The exact generation
time obtained from the finer mesh is
\begin{equation*}
T_{h_f}
=
\sum_{j=0}^\infty
\sum_{s=0}^{m-1}
P_{jm+s}^{(h_f)}
\bar a_{jm+s}^{(h_f)}.
\end{equation*}
For every coarse class with \(P_j^{(h_c)}>0\), the conditional mean
age satisfies
\begin{equation}
\bar a_j^{(h_c)}
=
\sum_{s=0}^{m-1}
\alpha_{j,s}
\bar a_{jm+s}^{(h_f)}.
\label{eq:aggregation_consistent_age}
\end{equation}

\begin{proposition}[Generation time is preserved under exact coarsening]
\label{pro:generation_time_mesh_coarsening}
For two equivalent nested temporal representations,
\begin{equation*}
T_{h_c}=T_{h_f}=T.
\end{equation*}
\end{proposition}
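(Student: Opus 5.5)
The plan mirrors the proof of Theorem~\ref{thm:reconstructed_entropy_mesh_invariance}, with conditional mean ages playing the role of within-class entropies. First, I will verify the consistency relation \eqref{eq:aggregation_consistent_age}. The coarse class \(I_j^{(h_c)}\) is the disjoint union of the fine classes \(I_{jm+s}^{(h_f)}\), \(s=0,\ldots,m-1\). Splitting the integral in \eqref{eq:class_mean_age} over these subintervals gives
\[
P_j^{(h_c)}\bar a_j^{(h_c)}
=\int_{I_j^{(h_c)}}a\,p(a)\,da
=\sum_{s=0}^{m-1}\int_{I_{jm+s}^{(h_f)}}a\,p(a)\,da
=\sum_{s=0}^{m-1}P_{jm+s}^{(h_f)}\bar a_{jm+s}^{(h_f)}.
\]
Dividing by \(P_j^{(h_c)}>0\) and using \eqref{eq:exact_mesh_refinement} yields \eqref{eq:aggregation_consistent_age}. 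A fine subclass with \(P_{jm+s}^{(h_f)}=0\) contributes zero to the sum, so an undefined \(\bar a\) causes no difficulty.

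Next, I will define \(T_{h_c}=\sum_j P_j^{(h_c)}\bar a_j^{(h_c)}\), analogously to \(T_{h_f}\). Multiplying \eqref{eq:aggregation_consistent_age} by \(P_j^{(h_c)}\) and summing over \(j\) turns the coarse sum into the double sum \(\sum_j\sum_s P_{jm+s}^{(h_f)}\bar a_{jm+s}^{(h_f)}\). Since every fine index \(k\ge0\) is written uniquely as \(k=jm+s\), this double sum is \(T_{h_f}\). Finally, Proposition~\ref{pro:general_exact_generation_reconstruction}, applied separately with \(h=h_c\) and with \(h=h_f\), shows that each sum equals \(T=\int_0^\infty a\,p(a)\,da\).

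The only point needing care is rearranging the infinite double sums. All terms are nonnegative because \(a\ge0\) and \(p\ge0\), and the total equals \(T<\infty\) by the standing assumption. Tonelli's theorem, or unconditional convergence of nonnegative series, therefore justifies the reindexing. I do not expect any other obstacle, since the argument is the first-moment analogue of the entropy argument and involves no logarithmic terms.
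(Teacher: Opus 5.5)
Your proposal is correct and follows essentially the same route as the paper, which substitutes \eqref{eq:aggregation_consistent_age} into $T_{h_c}$, uses \eqref{eq:exact_mesh_refinement} to obtain the fine double sum $T_{h_f}$, and then invokes Proposition~\ref{pro:general_exact_generation_reconstruction}. You go slightly further by deriving \eqref{eq:aggregation_consistent_age} from the disjoint decomposition of the coarse class and by justifying the reindexing via nonnegativity. You also rightly note that applying Proposition~\ref{pro:general_exact_generation_reconstruction} separately at each mesh would already suffice on its own.
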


\begin{proof}
Using
\eqref{eq:aggregation_consistent_age},
\eqref{eq:exact_mesh_refinement}, and
\eqref{eq:generation_time_exact_invariance},
\begin{align*}
T_{h_c}
&=
\sum_{j=0}^\infty
P_j^{(h_c)}
\bar a_j^{(h_c)}
\\
&=
\sum_{j=0}^\infty
P_j^{(h_c)}
\sum_{s=0}^{m-1}
\alpha_{j,s}
\bar a_{jm+s}^{(h_f)}
\\
&=
\sum_{j=0}^\infty
\sum_{s=0}^{m-1}
P_{jm+s}^{(h_f)}
\bar a_{jm+s}^{(h_f)}
\\
&=
T_{h_f}
=
T.
\end{align*}
\end{proof}

The quantities \(T_{h_c}\) and \(T_{h_f}\) are expressed in years. Measured in class units,
Section~\ref{sec:temporal_discretization} gives
\begin{equation}
h_cT_{h_c}^{\mathrm{class}}
=
h_fT_{h_f}^{\mathrm{class}}
=
T.
\label{eq:mesh_generation_time_class_units}
\end{equation}

\subsection{Invariance of evolutionary entropy}
\label{subsec:evolutionary_entropy_mesh_invariance}

We now state the discrete-to-discrete invariance principle.

\begin{theorem}[Mesh invariance of evolutionary entropy]
\label{thm:evolutionary_entropy_mesh_invariance}

Let two nested representations describe the
same reproductive process and retain the within-class information
required for exact reconstruction. Then
\begin{equation*}
\frac{1}{h_c}
\frac{
S_{h_c}^{\mathrm{rec}}
}{
T_{h_c}^{\mathrm{class}}
}
=
\frac{1}{h_f}
\frac{
S_{h_f}^{\mathrm{rec}}
}{
T_{h_f}^{\mathrm{class}}
}
=
H.
\end{equation*}
\end{theorem}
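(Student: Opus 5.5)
The plan is to assemble the statement directly from the two invariance results already established for the numerator and the denominator. No new estimate is needed.

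For the numerator, Theorem~\ref{thm:reconstructed_entropy_mesh_invariance} gives
\[
S_{h_c}^{\mathrm{rec}}=S_{h_f}^{\mathrm{rec}}=S.
\]
This holds precisely because the retained within-class information supplies the entropies \(C_j^{(h_c)}\) and \(C_k^{(h_f)}\). The chain-rule identity of Theorem~\ref{thm:general_exact_entropy_reconstruction} then applies on each mesh separately.

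For the denominator, Proposition~\ref{pro:generation_time_mesh_coarsening} together with \eqref{eq:mesh_generation_time_class_units} gives
\[
h_cT_{h_c}^{\mathrm{class}}=h_fT_{h_f}^{\mathrm{class}}=T.
\]
Here the retained conditional mean ages \(\bar a_j^{(h)}\), or equivalently the displacements \(\varepsilon_j^{(h)}\), enter through \eqref{eq:generation_time_class_units}. The standing assumption \(0<T<\infty\) guarantees that both products are finite and nonzero.

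Next I will rewrite each side of the claimed identity as
\[
\frac{1}{h}\,\frac{S_h^{\mathrm{rec}}}{T_h^{\mathrm{class}}}
=\frac{S_h^{\mathrm{rec}}}{hT_h^{\mathrm{class}}}
\qquad (h=h_c,h_f).
\]
Substituting the two equalities above turns each expression into \(S/T\), which is \(H\) by \eqref{eq:continuous_cycle_entropy} with the one-year reference \(\tau=1\). Equivalently, each side is an instance of \eqref{eq:evolutionary_entropy_class_units}, so both equal \(H\), and hence they equal each other.

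There is no real obstacle in this argument. The only point needing care is bookkeeping. The hypothesis ``retain the within-class information'' must be read as supplying exactly the summaries \(C_j^{(h)}\) and \(\bar a_j^{(h)}\) computed from the common density \(p\). It is not an arbitrary refinement, since consistency of the splitting coefficients \eqref{eq:splitting_coefficients_from_conditional_density} is what the cited results rely on. In addition, the reference duration \(\tau\) must be the same on both meshes, so that the \(C_j^{(h)}\) are measured against a common reference.
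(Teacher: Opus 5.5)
Your proposal is correct and is essentially the paper's proof: it cites Theorem~\ref{thm:reconstructed_entropy_mesh_invariance} for $S_{h_c}^{\mathrm{rec}}=S_{h_f}^{\mathrm{rec}}=S$ and \eqref{eq:mesh_generation_time_class_units} for $h_cT_{h_c}^{\mathrm{class}}=h_fT_{h_f}^{\mathrm{class}}=T$, then substitutes to get $S/T=H$ on each mesh. Your extra remarks about $T>0$, a common reference duration, and splitting coefficients consistent with the common density $p$ are sound bookkeeping that the paper leaves implicit.
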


\begin{proof}
By
Theorem~\ref{thm:reconstructed_entropy_mesh_invariance},
$
S_{h_c}^{\mathrm{rec}}
=
S_{h_f}^{\mathrm{rec}}
=
S.
$
 Moreover, from  \eqref{eq:mesh_generation_time_class_units},
$
h_cT_{h_c}^{\mathrm{class}}
=
h_fT_{h_f}^{\mathrm{class}}
=
T.
$
Therefore
\[
\frac{1}{h_i}
\frac{
S_{h_i}^{\mathrm{rec}}
}{
T_{h_i}^{\mathrm{class}}
}
=
\frac{S}{T}
=
H,
\qquad i\in\{c,f\}.
\]

\end{proof}

The discrete quantities
$
S_h
\ \text{and} \
\frac{S_h}{T_h^{\mathrm{class}}}
$
are not mesh invariant: the first changes because refinement resolves
additional within-class structure, while the second is measured per
class interval. By Theorem~\ref{thm:evolutionary_entropy_mesh_invariance},
the quantity
$
\frac{1}{h}
\frac{S_h^{\mathrm{rec}}}{T_h^{\mathrm{class}}}
$
is mesh invariant. Once cycle entropy is reconstructed and the entropy
rate is expressed in a common unit of time, equivalent representations
have the same evolutionary entropy.

For example, the same human population represented by five-year or
annual age classes must have the same evolutionary entropy per year,
provided the two models represent the same reproductive process. Any
difference therefore reflects either discarded within-class
information or a genuinely different demographic model, not a
dependence of evolutionary entropy on class width.

\begingroup
\begin{corollary}[Mesh invariance of reproductive potential]
\label{cor:potential_mesh_invariance}
Let two nested temporal representations describe the same process \(V\),
with the same Euler--Lotka rate \(r\), time unit and reference duration.
If both retain the within-class entropies and conditional mean ages, then
\begin{equation*}
\Phi_{h_c}^{\mathrm{rec}}=\Phi_{h_f}^{\mathrm{rec}}=\Phi.
\end{equation*}
\end{corollary}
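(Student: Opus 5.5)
The plan is to reduce the statement to the exact reconstruction of the potential on a single mesh, Corollary~\ref{cor:potential_exact_reconstruction}, applied once to each mesh. By that corollary, for any mesh width \(h\) the reconstructed potential is
\[
\Phi_h^{\mathrm{rec}}=r-\frac{S_h^{\mathrm{rec}}}{hT_h^{\mathrm{class}}},
\]
where \(r\) is the Euler--Lotka rate of the process. By hypothesis, the two nested representations share \(V\), the rate \(r\), the time unit and the reference duration \(\tau\). Hence the only mesh-dependent ingredient is the quotient \(S_h^{\mathrm{rec}}/(hT_h^{\mathrm{class}})\).

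The key steps are as follows.

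First, invoke Theorem~\ref{thm:evolutionary_entropy_mesh_invariance}. Its hypotheses are exactly the retention of within-class entropies and conditional mean ages assumed here. It gives
\[
\frac{1}{h_c}\frac{S_{h_c}^{\mathrm{rec}}}{T_{h_c}^{\mathrm{class}}}=\frac{1}{h_f}\frac{S_{h_f}^{\mathrm{rec}}}{T_{h_f}^{\mathrm{class}}}=H.
\]

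Second, substitute this into the expression for \(\Phi_h^{\mathrm{rec}}\) for \(h=h_c\) and \(h=h_f\). Both then equal \(r-H\).

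Third, conclude with Proposition~\ref{pro:potential_decomposition}, which states \(r=H+\Phi\) for the same reference \(\tau\). This gives \(r-H=\Phi\).

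Alternatively, one can apply Corollary~\ref{cor:potential_exact_reconstruction} directly to each mesh, since each is a temporal partition of the same \(V\) with the same \(r\). This yields \(\Phi_{h_c}^{\mathrm{rec}}=\Phi\) and \(\Phi_{h_f}^{\mathrm{rec}}=\Phi\) separately. I would mention both routes and use the first, because it makes the role of mesh invariance of \(H\) explicit.

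There is no substantive analytic obstacle. The only point needing care is bookkeeping. The same \(r\) must be used on both meshes, rather than a rate obtained by re-solving a discrete Euler--Lotka equation on each mesh, which would generally differ. The same reference duration must also be used, since by Corollary~\ref{cor:potential_unit_reference} changing \(\tau\) shifts \(\Phi\) by \(\log(\tau_2/\tau_1)/T\). Finally, the finiteness assumptions of Section~\ref{sec:continuous_demography} must hold so that all entropies and the decomposition are well defined. I would state these points explicitly when invoking the cited results.
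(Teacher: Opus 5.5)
Your proposal is correct and follows essentially the same route as the paper: the paper invokes Theorem~\ref{thm:evolutionary_entropy_mesh_invariance} to identify both reconstructed entropy rates with \(H\), then uses the common rate \(r\) together with \eqref{eq:potential_exact_reconstruction} to conclude \(\Phi_{h_c}^{\mathrm{rec}}=\Phi_{h_f}^{\mathrm{rec}}=r-H=\Phi\). Your appeal to Proposition~\ref{pro:potential_decomposition} for the final equality is the same justification that underlies \eqref{eq:potential_exact_reconstruction}, so the two arguments coincide.
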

The proof is given in Appendix~\ref{app:potential_proofs}.

\par\endgroup

\subsection{Uniform refinement and the logarithmic correction}
\label{subsec:uniform_refinement_correction}

Exact reconstruction requires within-class temporal information. When
it is unavailable, the simplest approximation is to assume that
reproductive contribution is uniformly distributed within each class. Refining each class into \(m\) equal subclasses then gives
\begin{equation*}
\alpha_{j,s}
=
\frac{1}{m}.
\end{equation*}
Hence \(E_j^{(h_c,h_f)}=\log m\) for every \(j\), so
\begin{equation*}
S_{h_f}
=
S_{h_c}+\log m.
\end{equation*}
Since \(h_c=mh_f\), we have 
$
\log m
=
\log\frac{h_c}{h_f},
$
and therefore
\begin{equation*}
S_{h_f}+\log h_f
=
S_{h_c}+\log h_c.
\end{equation*}

Thus, when the within-class structure is approximated as uniform, the
logarithmic correction compensates for the entropy increase produced by
mesh refinement, and the quantity \(S_h+\log h\) is invariant.

\section{Unresolved within-class structure and the logarithmic correction}
\label{sec:unresolved_within_class_structure}

The exact reconstruction results of
Sections~\ref{sec:temporal_discretization} and
\ref{sec:change_temporal_mesh} require information on reproductive
timing within each class. In most discrete demographic models,
however, only the class probabilities are available; the within-class
distributions \(\varphi_j^{(h)}\) are not.

The simplest continuous reconstruction distributes the reproductive
probability of each class uniformly over its interval. Thus,
for \(a\in I_j^{(h)}\), define
\begin{equation*}
p_h^{\mathrm{unif}}(a)
=
\frac{P_j^{(h)}}{h}.
\end{equation*}
This is the piecewise constant probability density associated with the
discrete class probabilities. Since
\[
\int_{I_j^{(h)}}
p_h^{\mathrm{unif}}(a)\,da
=
P_j^{(h)},
\]
this reconstruction preserves the reproductive probability
of every class.

The cycle entropy of the reconstructed density is
\begin{align*}
S_h^{\mathrm{unif}}
&=
-\int_0^\infty
p_h^{\mathrm{unif}}(a)
\log p_h^{\mathrm{unif}}(a)\,da
\\
&=
-\sum_{j=0}^\infty
\int_{I_j^{(h)}}
\frac{P_j^{(h)}}{h}
\log\!\left(
\frac{P_j^{(h)}}{h}
\right)da
\\
&=
-\sum_{j=0}^\infty
P_j^{(h)}
\log P_j^{(h)}
+
\log h.
\end{align*}
Hence
\begin{equation}
S_h^{\mathrm{unif}}
=
S_h+\log h.
\label{eq:uniform_reconstructed_entropy}
\end{equation}
Thus the logarithmic term is not an ad hoc correction to the discrete
Shannon entropy: it is the within-class entropy of the piecewise
uniform reconstruction.
Indeed, 
$
\varphi_j^{(h)}(a)=\frac{1}{h}$
for $
a\in I_j^{(h)},
$
and therefore
\[
C_j^{(h)}
=
-\int_{I_j^{(h)}}
\frac{1}{h}
\log\frac{1}{h}\,da
=
\log h.
\]
Equation~\eqref{eq:general_exact_entropy_partition} therefore gives
$
S=S_h+\log h
$
whenever \(p\) is uniform on every class.

For a general density, the deviation from uniformity is measured by
\begin{equation*}
D_j^{(h)}
=
\int_{I_j^{(h)}}
\varphi_j^{(h)}(a)
\log\!\left(
h \varphi_j^{(h)}(a)
\right)da.
\end{equation*}
This is the relative entropy of the true conditional density
\(\varphi_j^{(h)}\) with respect to the uniform density \(1/h\) on
\(I_j^{(h)}\) \cite{KullbackLeibler1951}. Hence
$
D_j^{(h)}\geq0,
$
with equality if and only if \(\varphi_j^{(h)}=1/h\) almost everywhere.
Consequently,
\begin{equation*}
C_j^{(h)}
=
\log h-D_j^{(h)}.
\end{equation*}
Substituting this identity into
\eqref{eq:general_exact_entropy_partition} gives the exact formula
\begin{equation}
S_h+\log h
=
S
+
\sum_{j=0}^\infty
P_j^{(h)}D_j^{(h)}.
\label{eq:corrected_entropy_exact_defect}
\end{equation}

The error after the logarithmic correction in  \eqref{eq:corrected_entropy_exact_defect} admits a
global interpretation. Since, for \(a\in I_j^{(h)}\),
\[
p_h^{\mathrm{unif}}(a)
=
\frac{P_j^{(h)}}{h},
\qquad
p(a)
=
P_j^{(h)}\varphi_j^{(h)}(a),
\]
we have
\[
\log
\frac{p(a)}{p_h^{\mathrm{unif}}(a)}
=
\log\!\left(
h \varphi_j^{(h)}(a)
\right).
\]
Therefore
\begin{equation}
S_h+\log h-S
=
D_{\mathrm{KL}}
\!\left(
p\,\middle\|\,p_h^{\mathrm{unif}}
\right),
\label{eq:corrected_entropy_global_KL}
\end{equation} where
$
D_{\mathrm{KL}}
\!\left(
p\,\middle\|\,p_h^{\mathrm{unif}}
\right)
=
\int_0^\infty
p(a)
\log
\frac{p(a)}{p_h^{\mathrm{unif}}(a)}
\,da
$  is the Kullback--Leibler divergence (relative entropy). 
Equivalently,
\[
D_{\mathrm{KL}}
\!\left(
p\,\middle\|\,p_h^{\mathrm{unif}}
\right)
=
\sum_{j=0}^\infty
P_j^{(h)}D_j^{(h)}.
\]
Thus the excess \(S_h+\log h-S\), not the term \(\log h\), is exactly the Kullback--Leibler divergence between the true reproductive-age density and its piecewise uniform reconstruction. This non-negative error cannot increase under nested refinement, since by \eqref{eq:entropy_refinement_bounds} the quantity \(S_h+\log h\) cannot increase when the mesh is refined.
In particular,
\begin{equation*}
S_h+\log h\geq S,
\end{equation*}
with equality iff \(p\) is constant almost everywhere on each non-empty
class.

\subsection{Continuous limit}
\label{subsec:corrected_entropy_continuous_limit}

Equation \eqref{eq:corrected_entropy_global_KL} identifies the corrected entropy error with the divergence between the true density and its piecewise uniform reconstruction. We now show that this error vanishes as the mesh is refined under the following assumptions.

\begin{theorem}[Continuous limit of the corrected discrete entropy]
\label{thm:corrected_entropy_continuous_limit}

Assume that \(p:[0,\infty)\to[0,\infty)\) is continuous and has compact
support. Then
\begin{equation}
\lim_{h\to0}
\left(
S_h+\log h
\right)
=
S.
\label{eq:corrected_entropy_continuous_limit}
\end{equation}
Equivalently,
\begin{equation}
\lim_{h\to0}
D_{\mathrm{KL}}
\!\left(
p\,\middle\|\,p_h^{\mathrm{unif}}
\right)
=
0.
\label{eq:within_class_relative_entropy_limit}
\end{equation}
\end{theorem}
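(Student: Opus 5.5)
By the exact identity \eqref{eq:corrected_entropy_global_KL}, the two limits \eqref{eq:corrected_entropy_continuous_limit} and \eqref{eq:within_class_relative_entropy_limit} are equivalent. So I will prove the first one directly, by writing $S_h+\log h$ as the differential entropy of the piecewise uniform reconstruction. By \eqref{eq:uniform_reconstructed_entropy},
\[
S_h+\log h=-\int_0^\infty p_h^{\mathrm{unif}}(a)\log p_h^{\mathrm{unif}}(a)\,da=\int_0^\infty \eta\bigl(p_h^{\mathrm{unif}}(a)\bigr)\,da,
\qquad \eta(x)=-x\log x,\ \eta(0)=0,
\]
while $S=\int_0^\infty\eta(p(a))\,da$ with $\tau=1$. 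The task is therefore to show that $\int\eta(p_h^{\mathrm{unif}})\to\int\eta(p)$.

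\textbf{Step 1 (uniform convergence).} Let the support of $p$ lie in $[0,L]$, and let $M=\max p<\infty$. Since $p$ is continuous on a compact set, it is uniformly continuous with some modulus $\omega$. On each class $I_j^{(h)}$, the value $p_h^{\mathrm{unif}}=P_j^{(h)}/h$ is an average of $p$ over an interval of length $h$. By the mean value theorem for integrals it equals $p(\xi_j)$ for some $\xi_j\in\overline{I_j^{(h)}}$. Hence
\[
\sup_a\bigl|p_h^{\mathrm{unif}}(a)-p(a)\bigr|\le\omega(h)\to0 .
\]
In addition, $0\le p_h^{\mathrm{unif}}\le M$, and $p_h^{\mathrm{unif}}$ vanishes outside $[0,L+h]\subset[0,L+1]$ for $h\le1$.

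\textbf{Step 2 (passage to the limit).} The function $\eta$ is continuous on $[0,M]$ and hence uniformly continuous there, including at $0$ where $\eta(x)\to0$. This step needs the most care, because $\log$ is singular at zero. Uniform continuity of $\eta$ on the compact interval $[0,M]$ removes the difficulty. It gives $\sup_a|\eta(p_h^{\mathrm{unif}}(a))-\eta(p(a))|\to0$, and both integrands are supported in the fixed bounded interval $[0,L+1]$. Therefore
\[
\Bigl|S_h+\log h-S\Bigr|\le (L+1)\sup_a\bigl|\eta(p_h^{\mathrm{unif}})-\eta(p)\bigr|\to0 .
\]
Dominated convergence with the bound $|\eta|\le \max_{[0,M]}|\eta|$ on $[0,L+1]$ gives the same conclusion.

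\textbf{Step 3 (conclusion).} Step 2 establishes \eqref{eq:corrected_entropy_continuous_limit}. Then \eqref{eq:within_class_relative_entropy_limit} follows at once from \eqref{eq:corrected_entropy_global_KL}.

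The compactness of the support and the finiteness of the relevant entropies in this section make all the quantities finite. The limit is taken over arbitrary $h\to0$, not only over nested meshes, so the monotonicity remark preceding the theorem is not needed.
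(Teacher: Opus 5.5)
Your proof is correct and follows essentially the same route as the paper. Both arguments obtain uniform convergence $p_h^{\mathrm{unif}}\to p$ from the modulus of continuity (you via the integral mean value theorem, the paper via the averaged bound), then use uniform continuity of $x\log x$ on a common bounded range of values and a common bounded support $[0,L+1]$, and finally read off the Kullback--Leibler statement from \eqref{eq:corrected_entropy_global_KL}.
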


\begin{proof}
Since \(p\) is continuous and compactly supported on \([0,\infty)\), it is uniformly continuous. Let
$
\omega_p(h)
=
\sup_{\lvert a-b\rvert\leq h}
\lvert p(a)-p(b)\rvert
$
be its modulus of continuity. Then
$
\omega_p(h)\longrightarrow0
\ 
\text{as } h\to0.
$

For \(a\in I_j^{(h)}\),
\begin{align*}
\left|
p_h^{\mathrm{unif}}(a)-p(a)
\right|
&=
\left|
\frac{1}{h}
\int_{I_j^{(h)}}p(s)\,ds
-
p(a)
\right|
\\
&\leq
\frac{1}{h}
\int_{I_j^{(h)}}
\left|
p(s)-p(a)
\right|ds
\\
&\leq
\omega_p(h).
\end{align*}
Therefore
\begin{equation}
\left\|
p_h^{\mathrm{unif}}-p
\right\|_\infty
\longrightarrow0.
\label{eq:uniform_histogram_convergence}
\end{equation}

Let
$\psi(x)=x\log x$ for $x \geq 0$,  with the usual understanding $0 \log 0 = 0$. 
Since \(p\) is bounded, so are the functions
\(p_h^{\mathrm{unif}}\).
Hence \(\psi\) is uniformly continuous on a common bounded interval
containing all their values.
 It then follows
from \eqref{eq:uniform_histogram_convergence} that
$
\psi\!\left(p_h^{\mathrm{unif}}\right)
\longrightarrow
\psi(p)
$
uniformly.

If \(\operatorname{supp}(p)\subset[0,L]\), then, for
\(0<h\le1\),
$
\operatorname{supp}(p_h^{\mathrm{unif}})
\subset
[0,L+1].
$
Therefore
\[
\int_0^\infty
p_h^{\mathrm{unif}}(a)
\log p_h^{\mathrm{unif}}(a)\,da
\longrightarrow
\int_0^\infty
p(a)\log p(a)\,da.
\]
Using \eqref{eq:uniform_reconstructed_entropy},
\[
S_h+\log h
=
-\int_0^\infty
p_h^{\mathrm{unif}}(a)
\log p_h^{\mathrm{unif}}(a)\,da,
\]
which proves
\eqref{eq:corrected_entropy_continuous_limit}.
Finally,
\eqref{eq:corrected_entropy_global_KL} gives
\[
D_{\mathrm{KL}}
\!\left(
p\,\middle\|\,p_h^{\mathrm{unif}}
\right)
=
S_h+\log h-S
\longrightarrow0,
\]
proving
\eqref{eq:within_class_relative_entropy_limit}.
\end{proof}

\begin{rem}
The compact-support assumption is convenient rather than essential. The same conclusion remains valid for densities on unbounded age
intervals, provided the histogram approximations converge with adequate
control of the entropy tails. The constant-hazard model of the next section has unbounded support, but
is treated exactly and does not require
Theorem~\ref{thm:corrected_entropy_continuous_limit}.
\end{rem}

\subsection{Generation time and evolutionary entropy}
\label{subsec:corrected_evolutionary_entropy_limit}

Under the uniform reconstruction, the representative reproductive age
of each class is its midpoint:
\begin{equation*}
a_j^{\mathrm{mid}}
=
\left(j+\frac12\right)h.
\end{equation*}
The corresponding generation time is
\begin{equation*}
T_h^{\mathrm{unif}}
=
\sum_{j=0}^\infty
P_j^{(h)}
\left(j+\frac12\right)h.
\end{equation*}

Since both the true conditional mean and the midpoint lie in the same
interval of width \(h\), we have 
$
\left|
T_h^{\mathrm{unif}}-T
\right|
\leq
\frac{h}{2}.
$
Consequently,
\begin{equation*}
\lim_{h\to0}
T_h^{\mathrm{unif}}
=
T.
\end{equation*}

Define the evolutionary entropy of the uniform reconstruction by
\begin{equation*}
H_h^{\mathrm{unif}}
=
\frac{
S_h+\log h
}{
T_h^{\mathrm{unif}}
}.
\end{equation*}

\begin{corollary}[Continuous limit of evolutionary entropy]
\label{cor:corrected_evolutionary_entropy_limit}

Under the assumptions of
Theorem~\ref{thm:corrected_entropy_continuous_limit},
\begin{equation*}
\lim_{h\to0}
H_h^{\mathrm{unif}}
=
H.
\end{equation*}
\end{corollary}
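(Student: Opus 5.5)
The plan is to treat $H_h^{\mathrm{unif}}$ as the quotient of two convergent sequences and invoke continuity of division at a point with nonzero denominator. For the numerator, Theorem~\ref{thm:corrected_entropy_continuous_limit} already gives $S_h+\log h\to S$ as $h\to0$ under exactly the standing hypotheses (continuity and compact support of $p$), so nothing further is needed there.

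For the denominator, I will use the bound $|T_h^{\mathrm{unif}}-T|\le h/2$ established just before the statement. To justify it, I will write $T=\sum_j P_j^{(h)}\bar a_j^{(h)}$ by Proposition~\ref{pro:general_exact_generation_reconstruction}. Then
\[
T_h^{\mathrm{unif}}-T=\sum_j P_j^{(h)}\bigl(a_j^{\mathrm{mid}}-\bar a_j^{(h)}\bigr).
\]
Both $a_j^{\mathrm{mid}}$ and $\bar a_j^{(h)}$ lie in $I_j^{(h)}$, and the midpoint is at distance at most $h/2$ from any point of the interval. Since $\sum_j P_j^{(h)}=1$, this yields the bound and hence $T_h^{\mathrm{unif}}\to T$.

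The only point requiring care, and the mild obstacle, is that the limit $T$ must be nonzero and the quotients must be well defined. The standing assumption $0<T<\infty$ gives $T>0$. Then for $h<T$ we have $T_h^{\mathrm{unif}}\ge T-h/2>T/2>0$, so $H_h^{\mathrm{unif}}$ is defined for all small $h$. Alternatively, $T_h^{\mathrm{unif}}\ge h/2>0$ automatically, since every midpoint is at least $h/2$.

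With both limits in hand, the algebra of limits gives
\[
\lim_{h\to0}H_h^{\mathrm{unif}}=\frac{\lim(S_h+\log h)}{\lim T_h^{\mathrm{unif}}}=\frac{S}{T}=H,
\]
using the reference $\tau=1$ in years as fixed in this section. If a quantitative statement is desired, I would add the explicit estimate
\[
\bigl|H_h^{\mathrm{unif}}-H\bigr|\le \frac{D_{\mathrm{KL}}(p\,\|\,p_h^{\mathrm{unif}})}{T_h^{\mathrm{unif}}}+\frac{|S|\,h/2}{T\,T_h^{\mathrm{unif}}},
\]
obtained by writing the difference over a common denominator and applying \eqref{eq:corrected_entropy_global_KL}. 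Both terms tend to zero.
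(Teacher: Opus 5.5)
Your proposal is correct and matches the argument the paper intends. The paper gives no separate proof: it relies on Theorem~\ref{thm:corrected_entropy_continuous_limit} for the numerator, the bound $|T_h^{\mathrm{unif}}-T|\le h/2$ for the denominator, and $T>0$ from the standing assumptions. Your justification of that bound via Proposition~\ref{pro:general_exact_generation_reconstruction} and your explicit error estimate are both valid additions.
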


\begingroup
\begin{corollary}[Continuous limit of reconstructed potential]
\label{cor:potential_uniform_limit}
Under the assumptions of
Theorem~\ref{thm:corrected_entropy_continuous_limit}, retain the
Euler--Lotka rate \(r\) of the original process and the one-year reference.
For the uniform reconstruction, define
\begin{equation*}
\Phi_h^{\mathrm{unif}}:=r-H_h^{\mathrm{unif}}
=r-\frac{S_h+\log h}{T_h^{\mathrm{unif}}}.
\end{equation*}
This is the potential of the net reproductive function
\(V_h^{\mathrm{unif}}(a):=e^{ra}p_h^{\mathrm{unif}}(a)\), which has
Euler--Lotka rate \(r\). Then
\begin{equation}
\Phi_h^{\mathrm{unif}}-\Phi
=-\bigl(H_h^{\mathrm{unif}}-H\bigr)\longrightarrow0
\qquad(h\to0).
\label{eq:potential_uniform_limit}
\end{equation}
\end{corollary}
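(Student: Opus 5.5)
The proof has three parts. First, $V_h^{\mathrm{unif}}$ has Euler--Lotka rate $r$. Second, its potential equals $r-H_h^{\mathrm{unif}}$. Third, the limit statement follows from Corollary~\ref{cor:corrected_evolutionary_entropy_limit} together with Proposition~\ref{pro:potential_decomposition}.

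\textbf{Euler--Lotka rate.} By construction,
\[
\int_0^\infty e^{-ra}V_h^{\mathrm{unif}}(a)\,da=\int_0^\infty p_h^{\mathrm{unif}}(a)\,da=\sum_j P_j^{(h)}=1,
\]
so $r$ solves the Euler--Lotka equation for $V_h^{\mathrm{unif}}$. Since $p_h^{\mathrm{unif}}$ is bounded with compact support, the remark after \eqref{eq:euler_lotka_continuous} (finite reproductive lifespan) gives uniqueness. The associated reproductive density is then $e^{-ra}V_h^{\mathrm{unif}}=p_h^{\mathrm{unif}}$.

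\textbf{Identification of the potential.} We apply Proposition~\ref{pro:potential_decomposition} to $V_h^{\mathrm{unif}}$. Its finiteness hypotheses hold: the generation time is positive and finite, and $p_h^{\mathrm{unif}}|\log p_h^{\mathrm{unif}}|$ is integrable because the integrand is bounded on a bounded support. The proposition gives
\[
\Phi[V_h^{\mathrm{unif}}]=r-S[p_h^{\mathrm{unif}}]/T[p_h^{\mathrm{unif}}],
\]
with $\tau=1$. By \eqref{eq:uniform_reconstructed_entropy}, $S[p_h^{\mathrm{unif}}]=S_h+\log h$.

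The first moment of $p_h^{\mathrm{unif}}$ over $I_j^{(h)}$ is $P_j^{(h)}$ times the midpoint $(j+\tfrac12)h$. Hence $T[p_h^{\mathrm{unif}}]=T_h^{\mathrm{unif}}$, and therefore $\Phi[V_h^{\mathrm{unif}}]=r-H_h^{\mathrm{unif}}=\Phi_h^{\mathrm{unif}}$. One check is needed for the division: $T_h^{\mathrm{unif}}>0$. This holds because every term $P_j^{(h)}(j+\tfrac12)h$ is nonnegative and at least one $P_j^{(h)}$ is positive.

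\textbf{Limit.} Proposition~\ref{pro:potential_decomposition} applied to the original process gives $\Phi=r-H$. Subtracting,
\[
\Phi_h^{\mathrm{unif}}-\Phi=-(H_h^{\mathrm{unif}}-H).
\]
Corollary~\ref{cor:corrected_evolutionary_entropy_limit} sends the right-hand side to $0$. That corollary itself follows from Theorem~\ref{thm:corrected_entropy_continuous_limit}, the bound $|T_h^{\mathrm{unif}}-T|\le h/2$, and $T>0$, which keeps the denominator away from zero for small $h$.

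\textbf{Main point of care.} The only delicate step is the identification of the potential. There one must verify that the proposition's finiteness assumptions hold for $p_h^{\mathrm{unif}}$. One must also confirm that the common rate $r$, rather than a re-solved rate, is the correct one. The direct integral computation of the Euler--Lotka rate settles the latter. Everything else is algebra.
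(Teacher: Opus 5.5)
Your proposal is correct and follows the paper's proof. Normalisation of $p_h^{\mathrm{unif}}$ gives $V_h^{\mathrm{unif}}$ the Euler--Lotka rate $r$, Proposition~\ref{pro:potential_decomposition} then identifies its potential as $r-H_h^{\mathrm{unif}}$, and Corollary~\ref{cor:corrected_evolutionary_entropy_limit} supplies the limit; your additional checks (uniqueness of the root, the finiteness hypotheses, the midpoint computation giving $T[p_h^{\mathrm{unif}}]=T_h^{\mathrm{unif}}$, and $T_h^{\mathrm{unif}}>0$) are details the paper leaves implicit, and all of them are correct.
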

The proof is given in Appendix~\ref{app:potential_proofs}.

\par\endgroup

Thus a discrete age-class model with unresolved within-class timing
admits a canonical continuous interpretation: the class probabilities
define a piecewise uniform density, whose cycle entropy is
\(S_h+\log h\), whose generation time is computed from the class
midpoints, and whose evolutionary entropy converges to that of the
underlying continuous process as \(h\to0\).

Exact and approximate reconstruction should be distinguished. When the
true within-class distributions are available, the results of
Sections~\ref{sec:temporal_discretization} and
\ref{sec:change_temporal_mesh} give exact reconstruction and mesh
invariance for every \(h\). When they are unavailable, the uniform
reconstruction provides the corrected entropy \(S_h+\log h\), and the
continuous quantities converge to their true values as \(h\to0\) under the hypotheses of Theorem~\ref{thm:corrected_entropy_continuous_limit}.

\section{The constant-hazard reproductive model}
\label{sec:constant_hazard}

We now consider a demographic model in which the within-class temporal
structure is known exactly. It provides a closed-form realisation of the
four temporal passages in Table~\ref{tab:intro_questions}.

Let \(A>0\) denote age at first reproduction and suppose that
post-maturity mortality is governed by a constant hazard
\(\mu>0\):
\begin{equation*}
\mu(a)=\mu,
\qquad a\geq A.
\end{equation*}
Conditional on survival to age \(A\), post-maturity survivorship is
therefore
\begin{equation*}
\frac{l(a)}{l(A)}
=
e^{-\mu(a-A)},
\qquad a\geq A.
\end{equation*}

Assume that reproduction begins at age \(A\) and that fertility is
constant thereafter:
\begin{equation*}
m(a)
=
\begin{cases}
0, & 0\leq a<A,\\[1mm]
m, & a\geq A.
\end{cases}
\end{equation*}
This implies
\begin{equation*}
V(a)
=
\begin{cases}
0, & 0\leq a<A,\\[1mm]
l(A)m\,e^{-\mu(a-A)}, & a\geq A.
\end{cases}
\end{equation*}

For \(a\geq A\),
\begin{equation*}
p(a)
\, = \, 
e^{-ra}V(a)
\, = \, 
l(A)m\,e^{-rA}
e^{-(\mu+r)(a-A)}.
\end{equation*}
Define
$
\beta=\mu+r,
$
and note that \(\beta>0\). Euler--Lotka normalisation then yields
\[
1
=
l(A)m\,e^{-rA}
\int_A^\infty
e^{-\beta(a-A)}\,da
=
\frac{l(A)m\,e^{-rA}}{\beta},
\]
from which it follows that 
$
l(A)m\,e^{-rA}=\beta.
$
Therefore
\begin{equation}
p_\beta(a)
=
\begin{cases}
0, & 0\leq a<A,\\[1mm]
\beta e^{-\beta(a-A)}, & a\geq A.
\end{cases}
\label{eq:exponential_reproductive_kernel}
\end{equation}

Thus the normalised reproductive-age distribution is a shifted
exponential distribution, and the parameter \(\beta\) is the post-maturity decay rate.

\subsection{Continuous quantities}
\label{sec:continuous_constant_hazard_entropy}

For the exponential reproductive density \eqref{eq:exponential_reproductive_kernel},
\begin{equation*}
T(A,\beta)
=
\int_A^\infty
a\beta e^{-\beta(a-A)}\,da
=
A+\frac{1}{\beta}.
\end{equation*}
With the one-year reference duration understood, the continuous cycle entropy is
\begin{equation}
S(A,\beta)
=
-\int_A^\infty
p_\beta(a)\log p_\beta(a)\,da
=
1-\log\beta.
\label{eq:continuous_exponential_cycle_entropy}
\end{equation}
Hence
\begin{equation}
H(A,\beta)
=
\frac{
1-\log\beta
}{
A+\dfrac{1}{\beta}
}
=
\frac{
\beta(1-\log\beta)
}{
1+A\beta
}.
\label{eq:continuous_exponential_entropy}
\end{equation}

Thus generation time, cycle entropy, and evolutionary entropy all depend only on the physical maturity age $A$ and the continuous decay rate $\beta$.

\subsection{Continuous to continuous: change of temporal unit}
\label{subsec:constant_hazard_continuous_to_continuous}

The continuous-to-continuous passage is an immediate application of Theorem~\ref{pro:reference_scale_change}.
The constant-hazard family is preserved 
under the scaling 
$
\widetilde a=ca,
$
 with
 $\widetilde A=cA,
 \ 
 \widetilde\beta=\frac{\beta}{c}.$
 
Indeed,
\[
\widetilde p(\widetilde a)
=
\widetilde\beta
e^{-\widetilde\beta(\widetilde a-\widetilde A)}
=
\frac{1}{c}
p_\beta\!\left(
\frac{\widetilde a}{c}
\right)
\]
and therefore
\[
\widetilde S=S,
\qquad
\widetilde T=cT,
\qquad
\widetilde H=\frac{H}{c}.
\]
Hence the constant-hazard model provides an explicit realisation of the general change-of-unit theory of Section~\ref{sec:reference_scale}.

\subsection{Continuous to discrete: exact geometric aggregation}
\label{subsec:constant_hazard_continuous_to_discrete}

We aggregate the continuous reproductive-age distribution into
classes of width \(h>0\). Following
Remark~\ref{rem:translated_temporal_grids},  we choose the temporal grid
to begin at the maturity age \(A\). Thus the reproductive classes are
\begin{equation}
I_k^{(h)}
=
[A+kh,A+(k+1)h),
\qquad
k=0,1,2,\ldots.
\label{eq:constant_hazard_age_classes}
\end{equation}

\noindent Over one class interval, post-maturity survival is
$
\rho_h=e^{-\mu h},
$
and the corresponding population growth factor is
$
\lambda_h=e^{rh}.
$
Hence
\begin{equation}
q_h
=
\frac{\rho_h}{\lambda_h}
=
e^{-(\mu+r)h}
=
e^{-\beta h}.
\label{eq:q_h_beta}
\end{equation}
Integrating \eqref{eq:exponential_reproductive_kernel} over the classes
\eqref{eq:constant_hazard_age_classes} gives
\begin{equation*}
P_k^{(h)}
=
\int_{A+kh}^{A+(k+1)h}
\beta e^{-\beta(a-A)}\,da
\nonumber\\
=
(1-e^{-\beta h})e^{-\beta kh}
=
(1-q_h)q_h^k.
\end{equation*}
Thus
$
P_k^{(h)}
=
(1-q_h)q_h^k.
$

The geometric reproductive distribution is therefore not an
approximation to the continuous constant-hazard model, but the exact
class distribution obtained by integration of the shifted
exponential density. The continuous-to-discrete passage induces the parameter transformation 
$
\beta
\longmapsto
q_h=e^{-\beta h}.
$

\subsection{Discrete to continuous: exact reconstruction}
\label{subsec:constant_hazard_discrete_to_continuous}
\label{sec:exact_finite_mesh_reconstruction}

The reverse passage is also exact within the constant-hazard family.

From
\eqref{eq:q_h_beta} it follows that  
$ {\displaystyle \beta
	=
	-\frac{\log q_h}{h}.  }
$
Thus a geometric reproductive tail observed at mesh \(h\), together
with the assumption of a constant-hazard reproductive model, determines
the underlying continuous decay rate exactly. 
Moreover, the within-class reproductive structure is also determined
exactly. Writing 
$
a=A+kh+u$ for 
$
0\leq u<h,
$
the conditional density within the \(k\)-th class is
\begin{equation*}
\varphi_h(u)
=
\frac{
p_\beta(A+kh+u)
}{
P_k^{(h)}
}
\ = \ 
\frac{
\beta e^{-\beta u}
}{
1-e^{-\beta h}
}.
\end{equation*}

A distinctive feature of the exponential model is that
\(\varphi_h\) is independent of \(k\). 
The corresponding within-class cycle entropy is
\begin{equation*}
C_h(\beta)
=
-\int_0^h
\varphi_h(u)\log \varphi_h(u)\,du.
\end{equation*}
Evaluating the integral gives
\begin{equation}
C_h(\beta)
=
1
-
\log\!\left(
\frac{\beta}{1-e^{-\beta h}}
\right)
-
\frac{
\beta h e^{-\beta h}
}{
1-e^{-\beta h}
}.
\label{eq:within_class_entropy_explicit}
\end{equation}

\noindent The ordinary Shannon entropy of the geometric class distribution is
\begin{equation}
S_h
=
-\log(1-q_h)
-
\frac{q_h}{1-q_h}\log q_h.
\label{eq:finite_mesh_class_entropy}
\end{equation}

Theorem~\ref{thm:general_exact_entropy_reconstruction} applies
immediately. Since the within-class entropy is the same in every class, it follows that\newline
$
\sum_{k=0}^\infty
P_k^{(h)}C_h(\beta)
=
C_h(\beta),
$
and therefore
\begin{equation*}
S(A,\beta)
=
S_h+C_h(\beta).
\end{equation*}

Substitution of
\(q_h=e^{-\beta h}\) into
\eqref{eq:finite_mesh_class_entropy} and
\eqref{eq:within_class_entropy_explicit} gives
\[
S_h+C_h(\beta)
=
1-\log\beta,
\]
exactly as in
\eqref{eq:continuous_exponential_cycle_entropy}.
Equivalently, writing 
$\beta=-\frac{\log q}{h},
$
the reconstructed cycle entropy becomes
\begin{equation}
S_h^{\mathrm{exact}}(q)
=
1-
\log\!\left(
\frac{-\log q}{h}
\right).
\label{eq:exact_cycle_entropy_q}
\end{equation}

The same reconstruction applies to generation time.
If the \(k\)-th
class is represented by its left endpoint \(A+kh\), the corresponding
discrete generation time is
\begin{equation*}
T_h^{\mathrm{disc}}(A,q)
=
\sum_{k=0}^\infty
(A+kh)(1-q)q^k
\nonumber\\
=
A+h\frac{q}{1-q}.
\end{equation*}

\noindent The mean reproductive displacement within a class is
\begin{equation*}
\delta_h(\beta)
=
\int_0^h
u \varphi_h(u)\,du
=
\frac{1}{\beta}
-
\frac{
he^{-\beta h}
}{
1-e^{-\beta h}
}.
\end{equation*}
As for the within-class density, the displacement is independent of the class index.
Consequently $T(A,\beta)
=
T_h^{\mathrm{disc}}(A,q_h)
+
\delta_h(\beta)$
or, in terms of \(q\),
\begin{equation*}
\delta_h(q)
=
\frac{h}{-\log q}
-
h\frac{q}{1-q}.
\end{equation*}

\noindent Therefore the exactly reconstructed generation time is
\begin{equation}
T_h^{\mathrm{exact}}(A,q)
=
A+\frac{h}{-\log q}.
\label{eq:exact_generation_time_q}
\end{equation}

The exactly reconstructed evolutionary entropy is thus written in the form

\begin{equation}
	H_h^{\mathrm{exact}}(A,q)
	=
	\frac{
		S_h^{\mathrm{exact}}(q)
	}{
		T_h^{\mathrm{exact}}(A,q)
	} 
=
\frac{
1-
\log\!\left(
\dfrac{-\log q}{h}
\right)
}{
A+\dfrac{h}{-\log q}
}.
\label{eq:exact_finite_mesh_H} 
\end{equation}

\begin{theorem}[Reconstruction from a geometric representation]
\label{thm:exact_temporal_invariance}
Fix \(A>0\), \(\beta>0\) and \(h>0\). 
Then:
\begin{enumerate}
	\item[(i)] $S_h^{\mathrm{exact}}(q_h)
	=
	S(A,\beta),$
	\item[(ii)] $T_h^{\mathrm{exact}}(A,q_h)
	=
	T(A,\beta),$
	\item[(iii)] $H_h^{\mathrm{exact}}(A,q_h)
	=
	H(A,\beta).$
\end{enumerate}
\end{theorem}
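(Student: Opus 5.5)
The plan is to substitute \(q_h=e^{-\beta h}\) into the closed forms \eqref{eq:exact_cycle_entropy_q} and \eqref{eq:exact_generation_time_q}. The key observation is that \(-\log q_h/h=\beta\) exactly, since \(h>0\) and \(0<q_h<1\) when \(\beta>0\). With this, (i) reads \(S_h^{\mathrm{exact}}(q_h)=1-\log\beta\), which is \eqref{eq:continuous_exponential_cycle_entropy}. Likewise, (ii) reads \(T_h^{\mathrm{exact}}(A,q_h)=A+1/\beta=T(A,\beta)\). Item (iii) then follows by taking the quotient and comparing with \eqref{eq:continuous_exponential_entropy}. The denominator is positive because \(A>0\) and \(\beta>0\).

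This substitution is only meaningful if the closed forms genuinely encode the reconstruction. So I would also verify that they agree with the structural results of Section~\ref{sec:temporal_discretization}, namely \(S_h+C_h(\beta)\) and \(T_h^{\mathrm{disc}}+\delta_h(\beta)\). For entropy, I would invoke Theorem~\ref{thm:general_exact_entropy_reconstruction} on the translated grid of Remark~\ref{rem:translated_temporal_grids}. Since \(\varphi_h\) is independent of \(k\), this gives \(S=S_h+C_h(\beta)\). I would then check the algebra directly. With \(q=e^{-\beta h}\), we have \(\log q=-\beta h\), so \eqref{eq:finite_mesh_class_entropy} becomes \(-\log(1-q)+\beta hq/(1-q)\). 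Adding \eqref{eq:within_class_entropy_explicit}, the \(\pm\log(1-e^{-\beta h})\) terms cancel, and so do the \(\pm\beta h e^{-\beta h}/(1-e^{-\beta h})\) terms. What remains is \(1-\log\beta\).

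For generation time, I would use Proposition~\ref{pro:general_exact_generation_reconstruction} with left-endpoint representatives. Here \(T_h^{\mathrm{disc}}=A+hq/(1-q)\) and \(\delta_h=1/\beta-hq/(1-q)\), so their sum is \(A+1/\beta\).

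The main obstacle is computing \(C_h(\beta)\) and \(\delta_h(\beta)\) from \(\varphi_h(u)=\beta e^{-\beta u}/(1-e^{-\beta h})\). Both reduce to the truncated first moment \(\int_0^h u\beta e^{-\beta u}\,du=1/\beta-\bigl(h+1/\beta\bigr)e^{-\beta h}\), which I would obtain by integration by parts. The reason is that \(\log\varphi_h(u)=\log\bigl(\beta/(1-e^{-\beta h})\bigr)-\beta u\). Hence \(C_h=-\log\bigl(\beta/(1-e^{-\beta h})\bigr)+\beta\,\delta_h\), and substituting the moment gives \eqref{eq:within_class_entropy_explicit}. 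The remaining step is convergence of the geometric series. The sums \(\sum_k(1-q)q^k k\) and \(\sum_k(1-q)q^k\log\bigl((1-q)q^k\bigr)\) converge because \(0<q<1\). This justifies the exchange of sum and integral implicit in Theorem~\ref{thm:general_exact_entropy_reconstruction}.
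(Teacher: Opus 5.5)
Your proposal is correct and follows the paper's own proof. The paper also substitutes \(q_h=e^{-\beta h}\) into the closed forms \eqref{eq:exact_cycle_entropy_q}, \eqref{eq:exact_generation_time_q} and \eqref{eq:exact_finite_mesh_H}, relying on the preceding computations \(S_h+C_h(\beta)=1-\log\beta\) and \(T_h^{\mathrm{disc}}+\delta_h(\beta)=A+1/\beta\); your re-derivation of \(C_h\) and \(\delta_h\) through the truncated first moment, together with the convergence remark, makes explicit what the paper summarises as ``all the relevant computations were performed above.''
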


\begin{proof}
All the relevant computations were performed above. Substituting $q_h=e^{-\beta h}$ into
\eqref{eq:exact_cycle_entropy_q},
\eqref{eq:exact_generation_time_q}, and
\eqref{eq:exact_finite_mesh_H} yields identities $(i), \  (ii)$ and $(iii)$, respectively.
\end{proof}

Thus, within the constant-hazard family, the discrete-to-continuous
passage is exact: the continuous quantities are recovered from any
finite temporal mesh, without passing to the limit \(h\to0\).

\begin{rem}[Temporal invariance of entropy sensitivity]
In the open-group representation,
$
q_h=e^{-\beta h},
$
so the geometric parameter $q_h$ depends on the temporal mesh, whereas the
physical decay rate \(\beta\) does not. Consequently, the raw sensitivity
$
\partial H_h^{\rm exact}/\partial q_h
$
is mesh-dependent.

Recalling that 
$
H_h^{\rm exact}(A,q_h)=H(A,\beta)$
with $
\beta=-\frac{\log q_h}{h},
$
the chain rule gives
\[
\frac{\partial H_h^{\rm exact}}{\partial q_h}
=
-\frac{1}{hq_h}
\frac{\partial H}{\partial\beta},
\]
or equivalently,
\[
hq_h
\frac{\partial H_h^{\rm exact}}{\partial q_h}
=
-\frac{\partial H}{\partial\beta}.
\]
The scaled sensitivity on the left-hand side therefore depends only on the
physical parameter \(\beta\) and is independent of the temporal mesh.

Thus, if \(h_c\) and \(h_f\) are two temporal meshes describing the
same constant-hazard reproductive process, then
\[
hq_h
\frac{\partial H_h^{\rm exact}}{\partial q_h}
\]
has the same value for both representations.

Since
$ {\displaystyle
\frac{\partial H}{\partial\beta}
=
-\frac{\log\beta+A\beta}
{(1+A\beta)^2}, }
$
we obtain
\[
hq_h
\frac{\partial H_h^{\rm exact}}{\partial q_h}
=
\frac{\log\beta+A\beta}
{(1+A\beta)^2}.
\]

\end{rem}

This reduced sensitivity isolates the effect of the exponential--geometric
parameter, whereas the full matrix sensitivity with respect to the
terminal retention coefficient $\rho$ in open-group Leslie matrices
\cite{Oliveira2026} also incorporates the induced variation of the
Perron growth factor $\lambda$.

\begingroup
\begin{corollary}[Potential in the exponential--geometric family]
\label{cor:potential_exponential}
Fix \(A>0\), \(r\) and the one-year reference, and restrict
\(\beta>0\) to \(\beta>r\), so that \(\mu=\beta-r>0\).
Write \(\Phi(A,\beta;r)\) for the potential of this family.
For a mesh \(h>0\), define
\(\Phi_h^{\mathrm{exact}}(A,q;r):=r-H_h^{\mathrm{exact}}(A,q)\),
where \(0<q<1\) and \(-\log q/h>r\). With \(q_h=e^{-\beta h}\),
\begin{equation}
\Phi_h^{\mathrm{exact}}(A,q_h;r)
=\Phi(A,\beta;r)
=r-\frac{\beta(1-\log\beta)}{1+A\beta}.
\label{eq:potential_exponential_geometric}
\end{equation}
At fixed \(A,r\), its sensitivity satisfies
\begin{equation}
hq_h\frac{\partial\Phi_h^{\mathrm{exact}}}{\partial q}(A,q_h;r)
=-\frac{\partial\Phi}{\partial\beta}(A,\beta;r)
=-\frac{\log\beta+A\beta}{(1+A\beta)^2}.
\label{eq:potential_scaled_sensitivity}
\end{equation}
Thus both the reconstructed potential and this scaled sensitivity are
independent of the mesh for the same physical \(\beta\).
\end{corollary}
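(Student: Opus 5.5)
The corollary reduces to Theorem~\ref{thm:exact_temporal_invariance}(iii) combined with the growth decomposition of Proposition~\ref{pro:potential_decomposition}. For the constant-hazard process with maturity age \(A\), decay rate \(\beta>r\) and hazard \(\mu=\beta-r>0\), the net reproductive function is \(V(a)=\beta e^{ra}e^{-\beta(a-A)}\) for \(a\ge A\). Its Euler--Lotka rate is exactly \(r\), since \(e^{-ra}V=p_\beta\) integrates to one.

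\emph{Step 1: the closed form of \(\Phi\).} Proposition~\ref{pro:potential_decomposition} applies, because the finiteness assumptions hold for the shifted exponential. It gives \(\Phi(A,\beta;r)=r-H(A,\beta)\). Substituting \eqref{eq:continuous_exponential_entropy} yields the right-hand expression in \eqref{eq:potential_exponential_geometric}.

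\emph{Step 2: the mesh identity.} By definition \(\Phi_h^{\mathrm{exact}}(A,q_h;r)=r-H_h^{\mathrm{exact}}(A,q_h)\). Theorem~\ref{thm:exact_temporal_invariance}(iii) gives \(H_h^{\mathrm{exact}}(A,q_h)=H(A,\beta)\), so the first equality follows.

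I would also check the domain condition. With \(q=q_h\) we have \(-\log q_h/h=\beta>r\), and \(0<q_h<1\) because \(\beta>0\). Hence \(q_h\) lies in the admissible set, and \(\Phi_h^{\mathrm{exact}}\) is defined and differentiable in \(q\) on an open neighbourhood of \(q_h\).

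\emph{Step 3: the sensitivity.} At fixed \(A\) and \(r\), the term \(r\) is constant, so
\[
\frac{\partial\Phi_h^{\mathrm{exact}}}{\partial q}=-\frac{\partial H_h^{\mathrm{exact}}}{\partial q}.
\]
The chain-rule computation in the preceding remark uses \(\beta=-\log q/h\) and \(d\beta/dq=-1/(hq)\). It gives \(hq_h\,\partial_q H_h^{\mathrm{exact}}=-\partial_\beta H\). Therefore
\[
hq_h\,\partial_q\Phi_h^{\mathrm{exact}}=\partial_\beta H=-\partial_\beta\Phi,
\]
using \(\Phi=r-H\) with \(r\) held fixed.

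It remains to evaluate \(\partial_\beta H\) for \(H=\beta(1-\log\beta)/(1+A\beta)\). The numerator has derivative \(-\log\beta\). By the quotient rule,
\[
\partial_\beta H=\frac{-\log\beta\,(1+A\beta)-A\beta(1-\log\beta)}{(1+A\beta)^2}=-\frac{\log\beta+A\beta}{(1+A\beta)^2},
\]
in agreement with the remark. This gives \eqref{eq:potential_scaled_sensitivity}.

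Mesh independence follows because both right-hand sides depend only on \(A\), \(\beta\) and \(r\).

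The only delicate point is bookkeeping. The partial derivative must be taken with \(r\) held fixed, not re-solved from Euler--Lotka, and \(\mu\) is allowed to vary with \(q\). With that convention, the corollary is just the identity \(\Phi=r-H\) transported through Theorem~\ref{thm:exact_temporal_invariance}.
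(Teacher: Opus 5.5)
Your proposal is correct and follows essentially the same route as the paper. Both arguments take $\Phi=r-H$ from Proposition~\ref{pro:potential_decomposition} together with \eqref{eq:continuous_exponential_entropy}, use Theorem~\ref{thm:exact_temporal_invariance}(iii) for the mesh identity, and apply the chain rule through $\beta=-\log q/h$ with the formula for $\partial H/\partial\beta$; your domain check for $q_h$ and your explicit quotient-rule computation (the paper simply cites \eqref{eq:continuous_entropy_derivative}) only spell out steps the paper leaves implicit.
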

The proof is given in Appendix~\ref{app:potential_proofs}.

\par\endgroup

\subsection{Discrete to discrete: exact change of temporal mesh}
\label{subsec:constant_hazard_discrete_to_discrete}
\label{sec:exact_finite_mesh_entropy_rate}

We finally compare two discrete representations of the same
constant-hazard reproductive process. Let \(h_c\) and \(h_f\) denote
the coarse and fine class widths. Since both represent the same
continuous decay rate \(\beta\),
\[
q_{h_c}=e^{-\beta h_c},
\qquad
q_{h_f}=e^{-\beta h_f}.
\]
Eliminating \(\beta\) gives
$
q_{h_f}
=
q_{h_c}^{\,h_f/h_c}
$
or, equivalently,
$ {\displaystyle
-\frac{\log q_{h_c}}{h_c}
=
-\frac{\log q_{h_f}}{h_f}
=
\beta.}
$
Thus, while \(q_h\) is mesh-dependent, 
$
-\frac{\log q}{h}
$
is the corresponding mesh-independent decay rate.

For nested meshes 
$
h_c=mh_f,
$ and 
the splitting coefficients of
Section~\ref{subsec:coarsening_refinement} are obtained explicitly.
Since
$
q_{h_c}=q_{h_f}^{\,m},
$
we have
\begin{equation*}
\alpha_{s}
=
\frac{
(1-q_{h_f})q_{h_f}^{\,s}
}{
1-q_{h_f}^{\,m}
},
\qquad
s=0,\ldots,m-1.
\end{equation*}

These coefficients are independent of the coarse class index. Thus the
general family \(\alpha_{j,s}\) introduced in
Section~\ref{sec:change_temporal_mesh} reduces, in the exponential
case, to a single truncated geometric profile
$
\left(
\alpha_0,\ldots,\alpha_{m-1}
\right)
$
repeated identically in every reproductive class.

Theorem~\ref{thm:exact_temporal_invariance} (iii) specialises to the explicit
identity
\begin{equation*}
H_{h_c}^{\mathrm{exact}}
\left(
A,q_{h_c}
\right)
=
H_{h_f}^{\mathrm{exact}}
\left(
A,q_{h_f}
\right)
=
H(A,\beta),
\end{equation*}
where $
q_{h_f}
=
q_{h_c}^{\,h_f/h_c}.
$

The four passages are therefore completely explicit for the
constant-hazard reproductive model:
\[
\begin{array}{rcl}
C\to C
&:&
(A,\beta)
\mapsto
(cA,\beta/c),
\\[1mm]
C\to D
&:&
\beta
\mapsto
q_h=e^{-\beta h},
\\[1mm]
D\to C
&:&
q_h
\mapsto
\displaystyle
\beta=-\frac{\log q_h}{h},
\\[3mm]
D\to D
&:&
q_{h_c}
\mapsto
q_{h_c}^{\,h_f/h_c}.
\end{array}
\]

All four passages are determined by the single physical parameter
$ {\displaystyle
\beta
=
-\frac{\log q_h}{h}.
}
$

The exact continuous--discrete correspondence may be summarised by the
commutative diagram
\begin{equation*}
\begin{CD}
\beta
@>{\mathcal{E}^{\mathrm{cont}}}>>
\left(S(A,\beta),T(A,\beta),H(A,\beta)\right)
\\
@V{\mathcal{Q}_h:\,\beta\mapsto e^{-\beta h}}VV
@VV{=}V
\\
q_h
@>{\mathcal{E}^{\mathrm{exact}}_h}>>
\left(
S_h^{\mathrm{exact}}(q_h),
T_h^{\mathrm{exact}}(A,q_h),
H_h^{\mathrm{exact}}(A,q_h)
\right).
\end{CD}
\end{equation*}

Here
$
\mathcal{Q}_h(\beta)=e^{-\beta h},
$
and Theorem~\ref{thm:exact_temporal_invariance} shows that the diagram
commutes for every \(h>0\).
Thus temporal aggregation changes only the coordinate used to represent
post-maturity decay,
\[
\beta
\longleftrightarrow
q_h,
\]
while leaving the reconstructed quantities
\(S\), \(T\), and \(H\) unchanged.

\section{Continuous and discrete critical entropy thresholds}
\label{sec:critical_entropy_thresholds}

This section determines the entropy-maximising decay rate for the
constant-hazard reproductive model. Exact reconstruction transports the
continuous optimum to every temporal mesh, yielding a mesh-invariant
critical threshold. We compare it with the optimum obtained from the
conventional discrete age-class formulation after applying the
logarithmic correction to Shannon entropy.
Throughout this section, age and class width are measured in years, with
the one-year reference duration understood.

\subsection{The continuous critical decay rate}
\label{sec:continuous_entropy}

Recall from
\eqref{eq:continuous_exponential_entropy} that
$
H(A,\beta)
=
{\displaystyle \frac{
\beta(1-\log\beta)
}{
1+A\beta}
.}
$
For fixed \(A>0\), differentiation gives
\begin{equation}
\frac{\partial H}{\partial\beta}
=
-
\frac{
\log\beta+A\beta
}{
(1+A\beta)^2
}.
\label{eq:continuous_entropy_derivative}
\end{equation}

\noindent A critical point of $H$ satisfies
$
\log\beta+A\beta=0,
$
or equivalently $\beta= e^{-A\beta}.$

\begin{theorem}[Continuous critical entropy threshold]
\label{thm:continuous_critical_entropy_threshold}

For every \(A>0\), the function
\(\beta\mapsto H(A,\beta)\) has a unique global maximum on
\((0,\infty)\). The maximising decay rate is given by
\begin{equation}
\beta_c(A)
=
\frac{W(A)}{A},
\label{eq:continuous_critical_beta}
\end{equation}
where \(W\) denotes the principal Lambert \(W\)-function.
\end{theorem}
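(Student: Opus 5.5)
The plan is to use the derivative formula \eqref{eq:continuous_entropy_derivative}. Since \((1+A\beta)^2>0\) on \((0,\infty)\), the sign of \(\partial H/\partial\beta\) is the sign of \(-g(\beta)\), where
\[
g(\beta)=\log\beta+A\beta .
\]
I will study \(g\) directly. It is continuous and strictly increasing on \((0,\infty)\), because \(g'(\beta)=1/\beta+A>0\). It tends to \(-\infty\) as \(\beta\to0^+\) and to \(+\infty\) as \(\beta\to\infty\). By the intermediate value theorem and strict monotonicity, \(g\) therefore has exactly one zero \(\beta_c\in(0,\infty)\). Moreover, \(g<0\) on \((0,\beta_c)\) and \(g>0\) on \((\beta_c,\infty)\).

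It follows that \(H(A,\cdot)\) is strictly increasing on \((0,\beta_c]\) and strictly decreasing on \([\beta_c,\infty)\). Hence \(\beta_c\) is the unique global maximiser on \((0,\infty)\), and no separate boundary analysis is needed. For completeness, I will note the limits \(H\to0\) as \(\beta\to0^+\) (since \(\beta\log\beta\to0\)) and \(H\to-\infty\) as \(\beta\to\infty\). These are consistent with the monotonicity picture, but they are not required for the argument.

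To identify \(\beta_c\), I will rewrite \(g(\beta_c)=0\) as \(\beta_c=e^{-A\beta_c}\), so that \(\beta_c e^{A\beta_c}=1\). Multiplying by \(A\) gives
\[
(A\beta_c)\,e^{A\beta_c}=A .
\]
Since \(A\beta_c>0\) and \(x\mapsto xe^x\) is a bijection from \([0,\infty)\) onto \([0,\infty)\) whose inverse is the principal branch \(W\), this yields \(A\beta_c=W(A)\). Therefore \(\beta_c=W(A)/A\), which is \eqref{eq:continuous_critical_beta}.

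The only point needing care is the branch selection. Here it is immediate, because \(A>0\) forces \(A\beta_c>0\), and the principal branch is the unique real branch on the positive reals. Nothing in the argument is a genuine obstacle; the monotonicity of \(g\) does all the work.
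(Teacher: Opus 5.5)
Your proposal is correct and follows essentially the same route as the paper: the sign of $\partial H/\partial\beta$ is controlled by the strictly increasing function $g(\beta)=\log\beta+A\beta$, which has a unique zero, and rewriting $g(\beta_c)=0$ as $(A\beta_c)e^{A\beta_c}=A$ gives $\beta_c=W(A)/A$. Your remarks on the boundary limits and on choosing the principal branch are accurate supplements. The paper's proof additionally records $H(A,\beta_c)=\beta_c$, which the statement does not require.
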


\begin{proof}
The function
$g(\beta)=\log\beta+A\beta$
is strictly increasing because
$
g'(\beta)=\frac{1}{\beta}+A>0.
$
Moreover,
$
g(\beta)\longrightarrow-\infty$
as $\beta\to0^+$
and
$
g(\beta)\longrightarrow+\infty
 \ 
\text{as }\beta\to\infty,
$
and so $\log\beta+A\beta=0$ has a unique
solution $\beta_c$. By \eqref{eq:continuous_entropy_derivative},
\(H\) is increasing for $\beta < \beta_c$ and decreasing for $\beta > \beta_c$. Thus $\beta_c$ is the unique global maximum.

Since
$
A\beta_c e^{A\beta_c}=A,
$
we obtain
$
A\beta_c=W(A),
$
proving \eqref{eq:continuous_critical_beta}.

Finally, note that  $
\log\beta_c=-A\beta_c
$, and thus 
$
1-\log\beta_c
=
1+A\beta_c.
$
Substitution into
\eqref{eq:continuous_exponential_entropy} then gives
\[
H(A,\beta_c)
=
\frac{
\beta_c(1+A\beta_c)
}{
1+A\beta_c
}
\]
which proves
$
H(A,\beta_c) =
\beta_c.
$
\end{proof}

Thus the maximum evolutionary entropy equals the critical decay rate:
\[
H_{\max}=\beta_c.
\]
By exact reconstruction, the same critical decay rate is obtained at
every temporal resolution.
For a general reference \(\tau>0\), the critical equation is \(\log(\tau\beta_c)+A\beta_c=0\), giving \(\beta_c(A;\tau)=W(A/\tau)/A\). Thus mesh invariance holds for each fixed reference duration.

\subsection{Exact transport of the critical point to a temporal mesh}
\label{sec:exact_mesh_invariant_threshold}

For a mesh of width \(h>0\), the continuous decay rate \(\beta\) is represented by
\[
q_h=e^{-\beta h}.
\]
Hence the continuous critical decay rate \(\beta_c(A)\) corresponds to
\begin{equation}
q_{c,h}^{\mathrm{exact}}
=
e^{-\beta_c(A)h}.
\label{eq:exact_critical_q}
\end{equation}
By Theorem~\ref{thm:exact_temporal_invariance},
\begin{equation*}
H_h^{\mathrm{exact}}
\left(A,q_{c,h}^{\mathrm{exact}}\right)
=
H\!\left(A,\beta_c(A)\right)
=
\beta_c(A).
\end{equation*}

Substituting
\[
\beta_c
=
-\frac{\log q_{c,h}^{\mathrm{exact}}}{h}
\]
into \(\beta_c=e^{-A\beta_c}\) gives
\begin{equation*}
-\log q_{c,h}^{\mathrm{exact}}
=
h
\left(q_{c,h}^{\mathrm{exact}}\right)^{A/h}.
\end{equation*}
Although \(q_{c,h}^{\mathrm{exact}}\) depends on the temporal mesh, the corresponding decay rate does not:
\begin{equation*}
-\frac{\log q_{c,h}^{\mathrm{exact}}}{h}
=
\beta_c(A)
=
\frac{W(A)}{A}.
\end{equation*}
Thus the exact geometric critical point is the mesh-dependent representation of the unique continuous critical decay rate.

\begingroup
\begin{corollary}[Exact transport of the potential minimum]
\label{cor:potential_minimum}
Fix \(A>0\), the one-year reference and a growth rate
\(r<\beta_c(A)\). On the admissible domain
\(\beta>\max\{0,r\}\), the function \(\Phi(A,\beta;r)\) has its unique
minimum at \(\beta_c(A)\). For every \(h>0\), the corresponding minimum
of \(\Phi_h^{\mathrm{exact}}\) is attained at the point
\(q_{c,h}^{\mathrm{exact}}\) defined in \eqref{eq:exact_critical_q}, and
\begin{equation*}
\Phi_h^{\mathrm{exact}}(A,q_{c,h}^{\mathrm{exact}};r)
=\Phi(A,\beta_c(A);r)=r-\beta_c(A).
\end{equation*}
\end{corollary}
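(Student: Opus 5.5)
The plan is to reduce everything to the entropy results already proved, using $\Phi=r-H$ at fixed $r$. By Corollary~\ref{cor:potential_exponential}, $\Phi(A,\beta;r)=r-H(A,\beta)$ on the admissible domain $\beta>\max\{0,r\}$. Here $r$ is held fixed and $\mu=\beta-r>0$. Minimising $\Phi$ in $\beta$ is therefore the same as maximising $H$ in $\beta$ on that domain.

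By Theorem~\ref{thm:continuous_critical_entropy_threshold}, $\beta\mapsto H(A,\beta)$ has a unique global maximum on $(0,\infty)$ at $\beta_c(A)=W(A)/A$. From \eqref{eq:continuous_entropy_derivative}, $H$ is strictly increasing on $(0,\beta_c)$ and strictly decreasing on $(\beta_c,\infty)$. The hypothesis $r<\beta_c(A)$, together with $\beta_c>0$, ensures $\beta_c\in(\max\{0,r\},\infty)$, so the maximiser lies in the admissible domain. The restriction of $H$ to this interval therefore still has its unique maximum at $\beta_c$, by the same monotonicity. Hence $\Phi$ has its unique minimum there, with value $r-H(A,\beta_c)=r-\beta_c(A)$, using $H(A,\beta_c)=\beta_c$ from the same theorem.

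For the discrete statement, I will use that $q\mapsto\beta=-\log q/h$ is a strictly decreasing bijection. It maps the admissible set $\{0<q<1,\ -\log q/h>r\}$ onto $\{\beta>\max\{0,r\}\}$. By Theorem~\ref{thm:exact_temporal_invariance}(iii), or directly from \eqref{eq:potential_exponential_geometric}, we have $\Phi_h^{\mathrm{exact}}(A,q;r)=\Phi(A,-\log q/h;r)$. So $\Phi_h^{\mathrm{exact}}(A,\cdot\,;r)$ is $\Phi$ composed with this bijection. Its unique minimiser is therefore the preimage of $\beta_c$, namely $q=e^{-\beta_c h}=q_{c,h}^{\mathrm{exact}}$ from \eqref{eq:exact_critical_q}, and the minimum value is unchanged.

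The only delicate point is checking the domain. One must verify that the admissible interval in $\beta$ contains $\beta_c$, which is exactly what the hypothesis $r<\beta_c(A)$ guarantees. One must also verify that the correspondence between $q$ and $\beta$ maps admissible sets onto each other. Everything else is a direct transfer of already established results.
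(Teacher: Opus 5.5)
Your proposal is correct and follows essentially the same route as the paper. It reduces minimising $\Phi=r-H$ at fixed $r$ to maximising $H$, invokes Theorem~\ref{thm:continuous_critical_entropy_threshold} with $r<\beta_c(A)$ placing $\beta_c$ in the admissible domain, and transports the minimum through the one-to-one map $q=e^{-\beta h}$ using Corollary~\ref{cor:potential_exponential}; your checks of monotonicity on the restricted interval and of the matching admissible $q$- and $\beta$-sets just make explicit what the paper's short proof leaves implicit.
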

The proof is given in Appendix~\ref{app:potential_proofs}.

The restriction to fixed \(r\) is essential: a change in \(r\) contributes
separately to the variation of \(\Phi=r-H\).
\par\endgroup

\subsection{Comparison with the conventional discrete threshold}
\label{sec:renormalized_discrete_threshold}

For the geometric reproductive distribution
\[
P_k=(1-q)q^k,
\qquad 0<q<1,
\]
the Shannon entropy is
\[
S_h(q)
=
-\log(1-q)
-
\frac{q}{1-q}\log q.
\]
Section~\ref{sec:unresolved_within_class_structure} shows that the logarithmically corrected cycle entropy for a mesh of width \(h\) is \(S_h(q)+\log h\).

To retain the discrete age-class convention used in
\cite{BuescuElaydiOliveira2026}, we represent the classes by their left endpoints, giving
\begin{equation*}
T_h^{\mathrm{disc}}(A,q)
=
A+h\frac{q}{1-q}.
\end{equation*}
The corrected discrete entropy rate is therefore the quotient
\[
\frac{S_h(q)+\log h}
{A+h\dfrac{q}{1-q}}.
\]
This left-endpoint convention differs from \(H_h^{\mathrm{unif}}\), whose denominator uses class midpoints and is \(T_h^{\mathrm{disc}}+h/2\).
The derivative of the corrected discrete quotient with left-endpoint generation time is
\begin{equation}
\frac{
h
}{
(1-q)^2
\left(
A+h\dfrac{q}{1-q}
\right)^2
}
\log\!\left[
\frac{1-q}{h q^{A/h}}
\right].
\label{eq:renormalized_discrete_H_derivative}
\end{equation}
Hence the critical point is determined by
\begin{equation*}
1-q
=
h q^{A/h}.
\end{equation*}

\begin{theorem}[Corrected discrete critical threshold]
\label{thm:renormalized_discrete_threshold}

For every \(A>0\) and \(h>0\), the quotient
\[
\frac{S_h(q)+\log h}
{A+h\dfrac{q}{1-q}}
\]
has a unique maximum on \(0<q<1\). Its maximising value
\(q_{c,h}^{\mathrm{disc}}\) is the unique solution of
\begin{equation}
1-q_{c,h}^{\mathrm{disc}}
=
h
\left(q_{c,h}^{\mathrm{disc}}\right)^{A/h}.
\label{eq:discrete_critical_q}
\end{equation}
At the maximum,
\begin{equation}
\frac{
S_h\!\left(q_{c,h}^{\mathrm{disc}}\right)+\log h
}{
A+h\dfrac{q_{c,h}^{\mathrm{disc}}}
{1-q_{c,h}^{\mathrm{disc}}}
}
=
-\frac{\log q_{c,h}^{\mathrm{disc}}}{h}.
\label{eq:discrete_critical_entropy_value}
\end{equation}
\end{theorem}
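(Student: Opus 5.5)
The plan mirrors the proof of Theorem~\ref{thm:continuous_critical_entropy_threshold}: take the derivative formula \eqref{eq:renormalized_discrete_H_derivative} as given (or verify it), reduce its sign to that of a single monotone function, and then evaluate the quotient at the critical point using the critical equation.

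\textbf{Step 1 (the derivative).} Write $N(q)=S_h(q)+\log h$ and $D(q)=A+hq/(1-q)$. Direct differentiation gives
\[
N'(q)=-\frac{\log q}{(1-q)^2},\qquad D'(q)=\frac{h}{(1-q)^2}.
\]
The derivative of the quotient therefore has the sign of $N'D-ND'$, which equals $(1-q)^{-2}$ times
\[
-\log q\,\Bigl(A+\tfrac{hq}{1-q}\Bigr)+h\log(1-q)+\tfrac{hq}{1-q}\log q-h\log h.
\]
The two $\tfrac{hq}{1-q}\log q$ terms cancel. What remains is $h\log(1-q)-A\log q-h\log h=h\log\bigl[(1-q)/(hq^{A/h})\bigr]$, which reproduces \eqref{eq:renormalized_discrete_H_derivative}.

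\textbf{Step 2 (uniqueness of the critical point).} The prefactor in \eqref{eq:renormalized_discrete_H_derivative} is positive, so the sign of the derivative is the sign of
\[
g(q)=\log(1-q)-\log h-\tfrac{A}{h}\log q .
\]
We have $g'(q)=-1/(1-q)-A/(hq)<0$, so $g$ is strictly decreasing on $(0,1)$. As $q\to0^+$, $g\to+\infty$ because $-\log q\to\infty$ and $A>0$. As $q\to1^-$, $g\to-\infty$. Hence $g$ has a unique zero $q_{c,h}^{\mathrm{disc}}$, which is exactly \eqref{eq:discrete_critical_q}. The quotient increases before this point and decreases after it, so $q_{c,h}^{\mathrm{disc}}$ is the unique global maximiser.

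\textbf{Step 3 (value at the maximum).} At a critical point $N'D=ND'$, so the maximal value equals
\[
\frac{N'}{D'}=\frac{-\log q}{h},
\]
which is \eqref{eq:discrete_critical_entropy_value}. This step does not need any substitution of \eqref{eq:discrete_critical_q}.

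The only real obstacle is the algebra in Step 1, in particular checking that the $\tfrac{hq}{1-q}\log q$ terms cancel. The monotonicity and limit arguments in Step 2 follow the continuous case directly.
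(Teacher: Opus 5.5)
Your proof is correct. Steps 1 and 2 coincide with the paper's argument. The paper takes the derivative formula stated just before the theorem as given, whereas you verify it, which makes your proof self-contained. It then uses that $q\mapsto (1-q)/(hq^{A/h})$ decreases strictly from $+\infty$ to $0$, which is the exponential of your $g$.

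Step 3 is where you genuinely differ. The paper substitutes the critical equation $\log(1-q)=\log h+\frac{A}{h}\log q$ into $S_h(q)$. This factors the numerator as $-\log q\,\bigl(\frac{A}{h}+\frac{q}{1-q}\bigr)$, which cancels against $D(q)=h\bigl(\frac{A}{h}+\frac{q}{1-q}\bigr)$. You instead use the general fact that at a critical point of a quotient $N/D$ one has $N/D=N'/D'$. This is legitimate here because $D>0$ and $D'=h/(1-q)^2>0$. The maximal value is then read off from derivatives you already computed in Step 1, and the critical equation is never substituted.

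Your route is shorter and shows structurally why the maximal rate equals the decay-rate coordinate: $N'/D'=-\log q/h$ holds identically in $q$, not only at the maximiser. The same observation applied to the continuous $H(A,\beta)$, with $N=1-\log\beta$ and $D=A+1/\beta$, gives $N'/D'=\beta$, hence $H(A,\beta_c)=\beta_c$ without substitution. The paper's route, in exchange, shows explicitly how the specific form of the critical equation produces the cancellation.
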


\begin{proof}
The function
\[
q\longmapsto\frac{1-q}{h q^{A/h}}
\]
is strictly decreasing from \(+\infty\) to \(0\) on \((0,1)\), so
\eqref{eq:discrete_critical_q} has a unique solution. Equation
\eqref{eq:renormalized_discrete_H_derivative} shows that the derivative is positive before this solution and negative after it.

At the critical point,
\[
\log(1-q)
=
\log h+\frac{A}{h}\log q,
\]
and therefore
\[
S_h(q)+\log h
=
-\log q
\left(
\frac{A}{h}
+
\frac{q}{1-q}
\right).
\]
Since
\[
A+h\frac{q}{1-q}
=
h
\left(
\frac{A}{h}
+
\frac{q}{1-q}
\right),
\]
equation \eqref{eq:discrete_critical_entropy_value} follows.
\end{proof}

At their respective maxima, both entropy rates equal their corresponding decay rate \(\beta\), but the maximising values differ. The exact threshold uses the reconstructed continuous model, whereas the conventional threshold retains the left-endpoint generation time and uses \(\log h\), rather than \(C_h(\beta)\), for within-class entropy.

Define
\begin{equation*}
\beta_{c,h}^{\mathrm{disc}}
=
-\frac{\log q_{c,h}^{\mathrm{disc}}}{h}.
\end{equation*}
Substituting
\(q_{c,h}^{\mathrm{disc}}
=e^{-\beta_{c,h}^{\mathrm{disc}}h}\)
into \eqref{eq:discrete_critical_q} gives
\begin{equation*}
\frac{
1-e^{-\beta_{c,h}^{\mathrm{disc}}h}
}{h}
=
e^{-A\beta_{c,h}^{\mathrm{disc}}}.
\end{equation*}
The continuous critical equation is
\[
\beta_c=e^{-A\beta_c}.
\]
Thus the two equations differ through
\[
\beta
\qquad\longleftrightarrow\qquad
\frac{1-e^{-\beta h}}{h}.
\]

Equivalently, in geometric coordinates,
\begin{align*}
-\log q
&=
h q^{A/h},
\\
1-q
&=
h q^{A/h}.
\end{align*}
Since \(1-q<-\log q\) for \(0<q<1\),
\begin{equation*}
q_{c,h}^{\mathrm{disc}}
<
q_{c,h}^{\mathrm{exact}},
\end{equation*}
and hence
\begin{equation*}
\beta_{c,h}^{\mathrm{disc}}
>
\beta_c.
\end{equation*}

\begin{theorem}[Continuous limit of the corrected discrete threshold]
\label{thm:discrete_critical_threshold_limit}

For fixed \(A>0\),
\begin{equation*}
\lim_{h\to0}
\beta_{c,h}^{\mathrm{disc}}
=
\beta_c(A)
=
\frac{W(A)}{A}.
\end{equation*}
More precisely,
\begin{equation}
\beta_{c,h}^{\mathrm{disc}}
=
\beta_c
+
\frac{
\beta_c^2
}{
2(1+A\beta_c)
}
h
+
O(h^2).
\label{eq:discrete_critical_beta_asymptotic}
\end{equation}
\end{theorem}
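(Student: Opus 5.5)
The plan is to treat the discrete critical equation as a smooth perturbation of the continuous one and apply the implicit function theorem at \(h=0\). Define, for \(\beta>0\) and \(h\ge 0\),
\[
F(\beta,h)=g_h(\beta)-e^{-A\beta},
\qquad
g_h(\beta)=\frac{1-e^{-\beta h}}{h}\ (h>0),\quad g_0(\beta)=\beta .
\]
By the reformulation derived just before the statement, \(\beta_{c,h}^{\mathrm{disc}}\) is characterised by \(F(\beta_{c,h}^{\mathrm{disc}},h)=0\). Moreover, \(\beta_c\) is characterised by \(F(\beta_c,0)=0\) (Theorem~\ref{thm:continuous_critical_entropy_threshold}).

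The key observation is that \(g_h(\beta)=\sum_{n\ge1}(-1)^{n-1}\beta^n h^{n-1}/n!\) is an entire function of \((\beta,h)\). Hence \(F\) is real-analytic near \((\beta_c,0)\), with
\[
g_h(\beta)=\beta-\tfrac12\beta^2h+O(h^2).
\]

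Next, compute \(\partial_\beta F(\beta_c,0)=1+Ae^{-A\beta_c}=1+A\beta_c>0\), using \(e^{-A\beta_c}=\beta_c\). The implicit function theorem then gives a smooth branch \(\beta(h)\) with \(\beta(0)=\beta_c\) solving \(F=0\) for small \(|h|\).

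To identify this branch with \(\beta_{c,h}^{\mathrm{disc}}\), I would use uniqueness. For each fixed \(h>0\), the map \(\beta\mapsto g_h(\beta)-e^{-A\beta}\) is strictly increasing, since both \(g_h\) and \(-e^{-A\beta}\) are increasing. It runs from \(-1\) at \(\beta=0^+\) to \(1/h>0\) as \(\beta\to\infty\), so it has exactly one zero. Equivalently, this is the uniqueness already proved in Theorem~\ref{thm:renormalized_discrete_threshold}. Therefore \(\beta(h)=\beta_{c,h}^{\mathrm{disc}}\) for small \(h>0\), and continuity of \(\beta(h)\) gives the limit \(\beta_c=W(A)/A\).

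For the first-order coefficient, I differentiate implicitly:
\[
\beta'(0)=-\frac{\partial_hF(\beta_c,0)}{\partial_\beta F(\beta_c,0)}.
\]
Here \(\partial_hF(\beta_c,0)=-\tfrac12\beta_c^2\), which gives
\[
\beta'(0)=\frac{\beta_c^2}{2(1+A\beta_c)}.
\]
Smoothness of \(\beta(h)\) gives the \(O(h^2)\) remainder, proving \eqref{eq:discrete_critical_beta_asymptotic}. The result is consistent with the earlier inequality \(\beta_{c,h}^{\mathrm{disc}}>\beta_c\).

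The main point needing care is not the expansion itself but the regularity of \(F\) at \(h=0\). The expression \((1-e^{-\beta h})/h\) is singular-looking, so it must be replaced by its power series, or equivalently by \(\int_0^\beta e^{-sh}\,ds\), before the implicit function theorem can be applied. A second point is the branch identification, which rests on the global uniqueness established above.
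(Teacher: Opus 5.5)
Your proposal is correct and follows the paper's proof: you use the same function $F(\beta,h)$, the same nondegeneracy condition $\partial_\beta F(\beta_c,0)=1+A\beta_c>0$, and get the same first-order coefficient, obtained by implicit differentiation instead of the paper's ansatz $\beta_c+ch+O(h^2)$. The paper's proof leaves two points implicit that you make explicit: extending $F$ analytically to $h\le 0$ through the power series, and identifying the implicit-function branch with $\beta_{c,h}^{\mathrm{disc}}$ through uniqueness of the discrete root.
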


\begin{proof}
Let
\[
F(\beta,h)
=
\frac{1-e^{-\beta h}}{h}
-
e^{-A\beta},
\qquad
F(\beta,0)
=
\beta-e^{-A\beta}.
\]
At \((\beta_c,0)\),
\[
F(\beta_c,0)=0,
\qquad
\frac{\partial F}{\partial\beta}(\beta_c,0)
=
1+A\beta_c>0.
\]
The implicit function theorem gives
\(\beta_{c,h}^{\mathrm{disc}}\to\beta_c\) as \(h\to0\).
Using
\[
\frac{1-e^{-\beta h}}{h}
=
\beta-\frac{\beta^2}{2}h+O(h^2)
\]
and writing
\[
\beta_{c,h}^{\mathrm{disc}}
=
\beta_c+ch+O(h^2),
\]
we obtain
\[
c(1+A\beta_c)-\frac{\beta_c^2}{2}=0,
\]
which proves \eqref{eq:discrete_critical_beta_asymptotic}.
\end{proof}

Since
$
1-q\sim-\log q$
as $q\to1,
$
the two threshold equations become asymptotically equivalent as the mesh is refined. They therefore give nearly identical critical decay rates for fine temporal resolutions and, in particular, for long-lived populations. Their agreement reflects the approximation
\[
1-q\approx-\log q
\]
rather than exact temporal invariance.

\section{Greenland shark across temporal representations}
\label{sec:greenland_shark_temporal_correspondence}

The Greenland shark
(\emph{Somniosus microcephalus}) \cite{BuescuElaydiOliveira2026GreenlandShark}
provides a numerical illustration of the temporal-invariance results.
Following the demographic estimates of~\cite{Nielsen2016GreenlandShark}, 
we
take
$
A=156\ \mathrm{years}
$
and consider the three longevity-calibrated annual geometric scenarios of~\cite{BuescuElaydiOliveira2026GreenlandShark}
\begin{equation*}
q_1\simeq 0.974613,\quad 0.987413,\quad 0.991632.
\end{equation*}
These are the midpoints of the intervals for which the annual 95\% reproductive quantile is 272, 392, or 512 years, respectively. They are model-based reproductive-persistence ratios \(q_1=\rho_1/\lambda_1\), not direct estimates of adult survival; all derived values below are computed before rounding. For the annual mesh,
\[
S_1
=
-\log(1-q_1)
-
\frac{q_1}{1-q_1}\log q_1,
\qquad
T_1^{\mathrm{disc}}
=
A+\frac{q_1}{1-q_1}.
\]
The corresponding continuous quantities are obtained exactly from
\(\beta=-\log q_1\) using the reconstruction formulas derived in
Section~\ref{sec:constant_hazard}. The quantities \(S,T,H\) below refer to this continuous reconstruction, whereas the analysis in~\cite{BuescuElaydiOliveira2026GreenlandShark} uses the annual discrete convention.

We now represent the same demographic process using classes of width
\(h=20\) years, anchored at the maturity age \(A\). The corresponding
geometric parameter is
\begin{equation*}
q_{20}
=
e^{-20\beta}
=
q_1^{20}.
\end{equation*}
The raw discrete quantities are
\[
S_{20}
=
-\log(1-q_{20})
-
\frac{q_{20}}{1-q_{20}}\log q_{20},
\qquad
T_{20}^{\mathrm{disc}}
=
A+20\frac{q_{20}}{1-q_{20}},
\]
whereas exact reconstruction gives
\[
S_{20}^{\mathrm{rec}}
=
S_{20}+C_{20}(\beta)
=
S,
\qquad
T_{20}^{\mathrm{rec}}
=
T_{20}^{\mathrm{disc}}+\delta_{20}(\beta)
=
T.
\]

\begin{table}[ht]
\centering
\scriptsize
\caption{Annual and twenty-year representations of the three longevity-calibrated Greenland
shark scenarios of~\cite{BuescuElaydiOliveira2026GreenlandShark}.}
\vspace{2mm}
\label{tab:greenland_temporal_comparison}
\renewcommand{\arraystretch}{1.12}
\setlength{\tabcolsep}{2.5pt}
\begin{tabular}{ccccccccc}
\toprule
\(q_1\)
&
\(q_{20}\)
&
\(S_1\)
&
\(S_{20}\)
&
\(S\)
&
\(T_1^{\rm disc}\)
&
\(T_{20}^{\rm disc}\)
&
\(T\)
&
\(H\)
\\
\midrule
0.974613
& 0.597920
& 4.660712
& 1.675901
& 4.660684
& 194.390
& 185.741
& 194.888
& 0.023915 \\
0.987413
& 0.776204
& 5.368762
& 2.375693
& 5.368755
& 234.446
& 225.367
& 234.945
& 0.022851 \\
0.991632
& 0.845300
& 5.779147
& 2.784587
& 5.779144
& 274.503
& 265.282
& 275.002
& 0.021015 \\
\bottomrule
\end{tabular}
\end{table}

Table~\ref{tab:greenland_temporal_comparison} illustrates the effect of
changing the temporal mesh. The ordinary Shannon entropy and the
conventional left-endpoint generation time vary substantially between
the annual and twenty-year representations, whereas exact
reconstruction recovers the same cycle entropy, physical generation
time, and evolutionary entropy in every case.

The uniform reconstruction gives
\[
S_{20}^{\mathrm{unif}}
=
S_{20}+\log 20,
\qquad
S_{20}^{\mathrm{unif}}-S
=
D_{\mathrm{KL}}
\left(
p\,\middle\|\,p_{20}^{\mathrm{unif}}
\right).
\]
For the three scenarios, the corresponding Kullback--Leibler divergences are
\[
0.010949,\quad 0.002670,\quad 0.001176.
\]
Even with twenty-year age classes, the piecewise uniform
reconstruction therefore provides an excellent approximation to the
continuous entropy $S$, while exact reconstruction recovers it
identically. 

For \(A=156\), the continuous critical decay rate is
\begin{equation*}
\beta_c
=
\frac{W(156)}{156}
=
0.023927686.
\end{equation*}
The exact critical parameter satisfies
\[
q_{c,h}^{\mathrm{exact}}
=
e^{-\beta_c h},
\]
whereas the corrected conventional discrete critical parameter satisfies
\[
1-q_{c,h}^{\mathrm{disc}}
=
h
\left(
q_{c,h}^{\mathrm{disc}}
\right)^{A/h}.
\]

\begin{table}[ht]
\centering
\small
\caption{Exact and corrected conventional discrete critical thresholds.}\medskip
\label{tab:greenland_critical_threshold_meshes}
\renewcommand{\arraystretch}{1.12}
\begin{tabular}{ccccc}
\toprule
\(h\)
&
\(q_{c,h}^{\rm exact}\)
&
\(q_{c,h}^{\rm disc}\)
&
\(\beta_c\)
&
\(\beta_{c,h}^{\rm disc}\)
\\
\midrule
1
& 0.976356311
& 0.976297209
& 0.023927686
& 0.023988221
\\
20
& 0.619678969
& 0.604682847
& 0.023927686
& 0.025152559
\\
\bottomrule
\end{tabular}
\end{table}

For \(h=1\), \(q_{c,1}^{\mathrm{disc}}=0.976297209\) 
	is the critical persistence threshold used in~\cite{BuescuElaydiOliveira2026GreenlandShark}.
Table~\ref{tab:greenland_critical_threshold_meshes} illustrates the
distinction between the exact and corrected discrete thresholds. The
exact threshold preserves the continuous critical decay rate
\[
-\frac{\log q_{c,h}^{\mathrm{exact}}}{h}
=
\beta_c
\]
at every temporal resolution, whereas the corrected conventional
threshold replaces the continuous decay rate by its finite-difference
approximation \((1-q)/h\). As predicted by
Theorem~\ref{thm:discrete_critical_threshold_limit}, the discrepancy
remains small at annual resolution and increases for coarser meshes.

Thus the geometric parameter, the ordinary Shannon entropy, the
conventional generation time, and the corrected conventional discrete
critical threshold all depend on the temporal mesh. Exact reconstruction,
however, recovers the same cycle entropy, physical generation time,
evolutionary entropy, and critical decay rate independently of the
chosen temporal representation.

\begingroup
For these Greenland shark scenarios, \(A\) and \(q_h\) determine
\(\beta\), but not \(r\) separately. Numerical values of \(\Phi\) therefore
require additional growth or survival information
\cite{BuescuElaydiOliveira2026GreenlandShark}.
\par\endgroup

\section{Conclusions}
\label{sec:conclusions}

Age-structured demographic models require a choice of temporal mesh, but this modelling decision has no counterpart in the underlying biology. 
The principal result is that complete temporal representations are equivalent, and that evolutionary entropy, at a fixed reference duration and with the necessary within-class information retained, is a mesh-invariant property of the reproductive process.
For the same demographic process and a fixed reference duration,
Corollary~\ref{cor:potential_mesh_invariance} extends this invariance to
reproductive potential through \(r=H+\Phi\).

Three conclusions follow from this viewpoint.
Firstly, the temporal mesh labels a representation; aggregation alone is not an invertible change of coordinates.
The four passages between continuous and discrete representations summarised in Table~\ref{tab:intro_questions} form a coherent structure in which the physical reproductive-age distribution is the invariant object and the mesh width $h$ serves as a coordinate label. Raw discrete quantities~-- class probabilities, Shannon entropy, left-endpoint generation time, geometric decay parameters~-- transform non-trivially under a change of mesh and therefore carry mixed information about the reproductive process and its temporal partition. Reconstructed cycle entropy $S_h^{\mathrm{rec}}$, physical generation time $T$, and evolutionary entropy $H$ do not. In this precise sense, and for a fixed reference duration, evolutionary entropy is intrinsic to the reproductive process, while its apparent dependence on age-class width is an artefact of unresolved within-class timing.

Secondly, the error after logarithmic correction is a Kullback--Leibler divergence. 
When within-class timing is unavailable, the piecewise uniform
reconstruction identifies \(S_h+\log h\) as the cycle entropy of a
well-defined continuous density. The identity
\begin{equation*}
	S_h+\log h-S
	=
	D_{\mathrm{KL}}
	\!\left(
	p\,\big\|\,p_h^{\mathrm{unif}}
	\right)
\end{equation*}
shows that the corrected entropy error measures exactly the divergence of the true reproductive-age density from its piecewise uniform reconstruction. It is non-negative, vanishes if and only if \(p\) is constant almost everywhere on each non-empty class, and tends to zero under the hypotheses of Theorem~\ref{thm:corrected_entropy_continuous_limit}. Exact reconstruction eliminates this error at every mesh; uniform reconstruction does so asymptotically under those hypotheses.

Finally, constant post-maturity hazard is the exactly solvable instance of the theory.
For constant hazard and fertility beyond the age at first reproduction, the four passages between continuous and discrete representations reduce to the single parameter transformation
\begin{equation*}
	\beta \;\longleftrightarrow\; q_h = e^{-\beta h},
\end{equation*}
and Theorem~\ref{thm:exact_temporal_invariance} shows that continuous and discrete descriptions coincide at every mesh, not merely in the limit.
It is an exactly solvable model in which the general theory becomes completely explicit.
It also clarifies the relation between the continuous critical decay rate $\beta_c(A) = W(A)/A$ and the corrected discrete threshold: the two coincide only through the first-order approximation $1 - q \approx -\log q$, which explains their near-agreement for long-lived populations without conflating them.

\medskip

Our results also have practical implications for comparative demography.
The results apply directly to age-class models satisfying the stated reconstruction assumptions, not automatically to arbitrary stage-classified matrices. Comparisons require a common physical time unit and reference duration, together with the necessary within-class information or an explicitly identified approximation; a matrix projection interval is not itself an age-class width. The same considerations apply to
entropy-maximising thresholds and, more generally, to comparative
statements about reproductive strategy. What is invariant across
species is not necessarily invariant across temporal meshes, and
conversely.

For stage-classified matrices, Salguero-G{\'o}mez and Plotkin
\cite{SalgueroGomezPlotkin2010} showed that aggregation criteria can bias
demographic elasticities. This is a distinct aggregation problem, rather than
a change of age-class width, and reinforces the need to specify what a
comparison preserves.

Evolutionary entropy is therefore, for a fixed reference duration, an intrinsic property of the biological reproductive process itself rather than of any particular complete age-class representation.

In a subsequent paper, we will apply these results to compare evolutionary
entropy and reproductive potential across species represented by demographic
models with different projection intervals.

\begingroup
\section*{Statements and Declarations}

\subsection*{Funding}
Jorge Buescu acknowledges partial support from Funda\c{c}\~ao para a Ci\^encia e a
Tecnologia through project UID/04561/2025
(\url{https://doi.org/10.54499/UID/04561/2025}).
Henrique M. Oliveira acknowledges partial support from Funda\c{c}\~ao para a
Ci\^encia e a Tecnologia through CAMGSD---Centro de An\'alise Matem\'atica,
Geometria e Sistemas Din\^amicos, under projects UID/04459/2025
(\url{https://doi.org/10.54499/UID/04459/2025}) and UID/PRR/04459/2025
(\url{https://doi.org/10.54499/UID/PRR/04459/2025}).

\subsection*{Competing interests}
The authors declare no competing interests.

\subsection*{Author contributions}
Both authors contributed to the conception of the study, the mathematical
analysis, and the preparation of the manuscript.

\subsection*{Data and code availability}
No new empirical data were collected. The demographic inputs for the Greenland
shark illustration are taken from
\cite{Nielsen2016GreenlandShark,BuescuElaydiOliveira2026GreenlandShark}.
All model inputs and formulae needed to reproduce the numerical values are
provided in the manuscript and the cited calibration study. The calculations require only direct evaluation
of these formulae and one-dimensional root finding; no dedicated software
package is required.

\subsection*{Ethics approval and consent to participate}
Not applicable. This theoretical study involved no human participants and no
animal experiments.

\subsection*{Consent for publication}
Not applicable. No individual participant data are reported.

\subsection*{Use of artificial intelligence}
ChatGPT (OpenAI) was used to assist with language editing, bibliographic searches and 
LaTeX preparation. The authors are fully responsible for the  scientific content.
\par\endgroup

\begingroup
\appendix
\section{Proofs of results for the reproductive potential}
\label{app:potential_proofs}
\label{sec:reproductive_potential_invariance}

\begin{proof}[Proof of Proposition~\ref{pro:potential_decomposition}]
From \(p(a)=e^{-ra}V(a)\),
\(\log(\tau V(a))=ra+\log(\tau p(a))\) wherever \(p(a)>0\).
Multiplication by \(p(a)\) and integration give
\begin{equation*}
T\Phi_\tau=r\int_0^\infty a p(a)\,da
+\int_0^\infty p(a)\log(\tau p(a))\,da
=rT-S_\tau.
\end{equation*}
Division by \(T>0\) proves the identity.
\end{proof}

\begin{proof}[Proof of Corollary~\ref{cor:potential_unit_reference}]
Theorem~\ref{pro:reference_scale_change} gives
\(\widetilde H_{\widetilde\tau}=H_\tau/c\), and the same change of unit gives
\(\widetilde r=r/c\). Subtracting in
\eqref{eq:entropy_potential_decomposition} proves
\(\widetilde\Phi_{\widetilde\tau}=\Phi_\tau/c\).
For a change of reference at fixed age coordinate, use
\(\log(\tau_2p)=\log(\tau_1p)+\log(\tau_2/\tau_1)\)
in \eqref{eq:continuous_cycle_entropy} and \(\int p=1\).
This gives the stated change in \(S_\tau\), and division by \(T\) gives
the change in \(H_\tau\). Euler--Lotka does not involve \(\tau\), so \(r\)
is unchanged; the change in \(\Phi_\tau\) follows by subtraction.
\end{proof}

\begin{proof}[Proof of Corollary~\ref{cor:potential_exact_reconstruction}]
Theorem~\ref{thm:general_exact_entropy_reconstruction} identifies the
numerator in \eqref{eq:potential_exact_reconstruction} with \(S\).
Equation~\eqref{eq:generation_time_class_units} and
Proposition~\ref{pro:general_exact_generation_reconstruction} identify its
denominator with \(T\). The reconstructed potential is therefore
\(r-S/T=r-H=\Phi\).
\end{proof}

\begin{proof}[Proof of Corollary~\ref{cor:potential_mesh_invariance}]
Theorem~\ref{thm:evolutionary_entropy_mesh_invariance} states that both
reconstructed entropy rates equal \(H\). Since both representations retain
the same \(r\), their potentials equal \(r-H=\Phi\), by
\eqref{eq:potential_exact_reconstruction}.
\end{proof}

\begin{proof}[Proof of Corollary~\ref{cor:potential_uniform_limit}]
The reconstructed density has integral one. Hence
\(\int_0^\infty e^{-ra}V_h^{\mathrm{unif}}(a)\,da=1\), and applying
Proposition~\ref{pro:potential_decomposition} to this density gives
\(\Phi_h^{\mathrm{unif}}=r-H_h^{\mathrm{unif}}\).
Corollary~\ref{cor:corrected_evolutionary_entropy_limit} gives
\(H_h^{\mathrm{unif}}\to H\). Subtracting from the fixed \(r\) proves
\eqref{eq:potential_uniform_limit}.
\end{proof}

\begin{proof}[Proof of Corollary~\ref{cor:potential_exponential}]
Subtract the entropy formula \eqref{eq:continuous_exponential_entropy}
from \(r\). Theorem~\ref{thm:exact_temporal_invariance} gives
\(H_h^{\mathrm{exact}}(A,q_h)=H(A,\beta)\), proving
\eqref{eq:potential_exponential_geometric}.
At fixed \(A,r\), differentiating \(\Phi=r-H\) gives
\(\partial\Phi/\partial\beta=-\partial H/\partial\beta\).
Since \(\beta=-\log q/h\), the chain rule gives
\(hq\,\partial\Phi_h^{\mathrm{exact}}/\partial q
=-\partial\Phi/\partial\beta\).
The expression \eqref{eq:continuous_entropy_derivative} for \(\partial H/\partial\beta\)
then proves \eqref{eq:potential_scaled_sensitivity}.
\end{proof}

\begin{proof}[Proof of Corollary~\ref{cor:potential_minimum}]
At fixed \(r\), minimising \(\Phi=r-H\) is equivalent to maximising \(H\).
Theorem~\ref{thm:continuous_critical_entropy_threshold} gives the unique
maximum at \(\beta_c(A)\), where \(H(A,\beta_c(A))=\beta_c(A)\).
The assumption \(r<\beta_c(A)\) places this point in the admissible domain.
The map \(\beta\mapsto e^{-\beta h}\) is one-to-one, and
Corollary~\ref{cor:potential_exponential} preserves the values of the
potential under this map. This proves the claimed minimum and its exact
transport to each mesh.
\end{proof}
\par\endgroup

\begingroup
\bibliographystyle{plain}
\bibliography{BibloH2}
\endgroup
\end{document}